\newtheorem{theorem}{Theorem}[section]
\newtheorem{proposition}[theorem]{Proposition}
\newtheorem{assumption}{Assumption}
\def\1g{1\hskip -3pt \mbox{l}}
    \title{On the dependence structure of the trade/no trade sequence of illiquid assets}
\author{
{\sc
Hamdi Ra\"{\i}ssi\footnote{Instituto de Estad\'{\i}stica, PUCV,
Errazuriz 2734, Valpara\'{\i}so, CHILE; email: hamdi.raissi@pucv.cl. This author  acknowledges the ANID funding Fondecyt 1201898.}
\qquad \qquad
}
}
\begin{document}

    \maketitle


    \abstract{In this paper, 
    we propose to consider the dependence structure of the trade/no trade categorical sequence of individual illiquid stocks returns. The framework considered here is wide as constant and time-varying zero returns probability are allowed. The ability of our approach in highlighting illiquid stock's features is underlined for a variety of situations. More specifically, we show that long-run effects for the trade/no trade categorical sequence may be spuriously detected in presence of a non-constant zero returns probability. 
    Monte Carlo experiments, and the analysis of stocks taken from the Chilean financial market, illustrate the usefulness of the tools developed in the paper.}

\quad

\textbf{\em Keywords:}
Time-varying illiquidity levels; Categorical financial time series; Serial dependence.\\

\vspace*{.4cm}\textit{JEL Classification:} C13; C22; C58.

\section{Introduction}

In the time series econometrics literature, it is widely documented that neglected non-stationary behaviors can generate misleading assessments. A classical example is given by the spurious regression described in Phillips (1987). Since the seminal papers of Mikosch and St\u{a}ric\u{a} (2004), St\u{a}ric\u{a} and Granger (2005), and Engle and Rangel (2008), a huge amount of papers have explored the possibility that an unconditionally non-constant variance can display spurious long memory features. This led to contributions proposing new tools for a correct assessment of time series dynamics, see Phillips and Xu (2006), Cavaliere and Taylor (2008) or Patilea and Ra\"{\i}ssi (2014) among many others. However, at the best of our knowledge, there is no contribution in this spirit on the behavior of the daily zero returns probability of illiquid stocks. In this paper, we aim to investigate the dependence structure of the trade/no trade sequence, i.e the process $(a_t)$ defined such that $a_t=1$ if a daily price change is observed at $t$, and 0 if not. It is suggested that long range dependence effects, could possibly be explained by non-constant illiquidity levels in time, i.e., $P(a_t=1)$ is non-constant. 
As a consequence, tools for analyzing the dependence structure of the daily trade/no trade structure corrected from the non-stationary probability are developed.  
%

Illiquid stocks exhibiting a large amount of daily zero returns are commonly observed in all markets, see e.g. Lesmond (2005) for emerging markets. In order to motivate the above arguments, examples taken from the chilean stock market (Santiago Stock Exchange, SSE) are provided in Figure \ref{one}-\ref{four}.\footnote{The author is grateful to Andres Celedon for research assistance.}  The main stocks indexes of the SSE are the IPSA (the 30 most liquid stocks) and the IGPA (comprising the 30 stocks of the IPSA, plus other stocks according to some criteria). The stocks presented here are part of the IGPA, but not of the IPSA. The daily returns of the stocks are displayed together with their smoothed $P(a_t=1)$. In view of the different panels in Figure \ref{one}-\ref{three}, trends and abrupt breaks can be seen in the daily trade probability. Such behaviors can be backed up by a variety of facts. In Figure \ref{one}, the common increasing trends could be explained by a general development of the SSE. Indeed, during the period 2000-2008 the Chilean GDP growth was relatively elevated, which made the SSE reached some kind of maturity.
On the other hand, abrupt shifts are often the product of particular events in the company history. For instance, the quick increase of the daily price change probability for the Security bank stock, may be explained by the merger by absorption of the Dresdner Bank Lateinamerika in September 2004. In addition, the Security Bank issued more than 32.8 millions new stocks after the capital increase announced during the extraordinary shareholders meeting by December 29th, 2004. Conversely, the quick decrease of the daily trade probability for the Provida stock seems to be a consequence of the take-over bid of Metlife on Provida during September 2013. As a consequence, we can conclude that a structural break in the zero returns probability, has occurred during the above mentioned period, with some degree of confidence. In Figure \ref{three}, long-run decreasing behaviors can be observed. 
All these observations suggest to allow for a time-varying $P(a_t=1)$ when analyzing the dependence structure of $(a_t)$.
Note that investigating the illiquidity sequence dynamics can be useful to understand various facts on the underlying asset. The methodology developed here may be used to identify the relevant past of $(a_t)$ to consider for modelling financial time series. In a stationary framework, reference can be made to Moysiadis and Fokianos (2014) for predicting $(a_t)$, or the GARCH models with covariates considered in Francq and Thieu (2019).

The article is structured as follows. In Section \ref{diag-tools}, the general framework of the study is presented. We investigate the analysis of the dependence structure of the $(a_t)$ when $P(a_t=1)$ is constant in Section \ref{cst-case}. The consequences of neglecting a time-varying $P(a_t=1)$ in the dependence structure of the trade/no trade sequence are highlighted in Section \ref{tv-case}. Then, adaptive tools that adequately take into account a non-constant $P(a_t=1)$ are developed. Monte Carlo experiments and real data analysis illustrate our theoretical findings in Section \ref{num-exp}.

\section{The theoretical framework and diagnostic tools}
\label{diag-tools}

Let us consider a one-period profit-and-loss random variable $r_t$. It is assumed that $r_1,\dots,r_n$ are observed, with $n$ the sample size. Recall that $a_t=0$ if $r_t=0$, and $a_t=1$ otherwise. We make the following assumption to describe abrupt or gradual changes in the stock's illiquidity degree.

\begin{assumption}[Non constant probability]\label{tv-prob} The time-varying probabilities $P(a_t=1)$ are given by $g(t/n)$ where $g(\cdot)$ is a non-constant deter\-ministic function, such that $0< g(\cdot)<1$ on the interval $(0,1]$, and satisfies a piecewise Lipschitz condition on  $(0,1]$.\footnote{For $u<0$, the function is set constant, that is $g(u)=\lim_{u\downarrow0}g(u)$. Throughout the paper, the piecewise Lipschitz condition means: there exists a positive integer $p$ and some mutually disjoint intervals $I_1,\dots,I_{p}$ with $I_1\cup\dots\cup I_{p}=(0,1]$ such that $g(u)=\sum_{l=1}^{p} g_l(u){\bf 1}_{\{u\in I_l\}},$ $u\in(0,1],$ where $g(\cdot)$ is a Lipschitz smooth function on $I_1,\dots,I_{p},$ respectively.} 
\end{assumption}

The rescaling device of Dahlhaus (1997) is often used to describe long run effects (see Cavaliere and Taylor (2007), Xu and Phillips (2008), Patilea and Ra\"{i}ssi (2013) and Wang, Zhao and Li (2019) among others). Note that the double subscript is avoided to simplify the notations.
The specification in Assumption \ref{tv-prob} is quite general, as it allows for patterns commonly observed in practice, such as trends or abrupt breaks. As we are interested in testing the independence of $(a_t)$, the framework given by Assumption \ref{tv-prob} is sufficient, with no need to consider (possibly stochastic) probabilities conditional to some past information. Indeed, as usual a test is built under some null hypothesis, i.e. independent $(a_t)$ in our case. The tools developed here may be mostly used in an identifying step to some modelling task developed in, e.g. Moysiadis and Fokianos (2014). Our approach is similar to the continuous time series analysis methodology. Indeed, it is usual to study the correlation structure of $(r_t)$ (respectively powers of $(r_t)$) in a first step. Then, if some correlations are found significant, a model is estimated for the (stochastic) conditional expectation (respectively the volatility) using for instance ARMA (respectively GARCH) models. \\

\subsection{The constant zero returns probability case}
\label{cst-case}

In this part, we assume that $(a_t)$ is strictly stationary, i.e. the particular case where the $g(\cdot)$ function is constant. Then, for testing the short run dependence in the $(a_t)$ sequence, the following hypotheses are considered

$$H_0:P(a_ta_{t-h}=1)=P(a_t=1)^2,\:\mbox{for all}\:h\in\{1,\dots,m\}$$
vs.
$$H_1:\exists\:h\in\{1,\dots,m\},\:\mbox{such that}\:P(a_ta_{t-h}=1)\neq P(a_t=1)^2,$$
taking a relatively small $m$. The hypothesis $H_1$ suggests the presence of a dependence structure for $(a_t)$. Let us introduce

$$\widehat{A}_m=(\hat{\gamma}_a(1)/\hat{\gamma}_a(0),\dots,\hat{\gamma}_a(m)/\hat{\gamma}_a(0))',\quad\mbox{where}\quad
\hat{\gamma}_a(h)=\frac{1}{n}\sum_{t=1+h}^{n}a_ta_{t-h}.$$
The statistic

$$\mathcal{S}_m^{(a)}=n\widehat{A}_m'\widehat{A}_m,$$
can be used for deciding $H_0\:\mbox{vs.}\:H_1$ with small $m$. On the other hand, for large $m$, the components of $\widehat{A}_m$ may be plotted to assess some persistency, or long-run effects, in the dependence structure of $(a_t)$. From the above, the tools for assessing some serial dependence are built under $H_0$, that is an iid $(a_t)$ process. The following proposition gives the asymptotic behavior of $\widehat{A}_m$ in the stationary framework. The convergence in distribution is denoted by $\stackrel{d}{\longrightarrow}$.

\begin{proposition}\label{iid-test-prop}
Suppose that $(a_t)$ is iid. Then,
as $n\to\infty$ we have,
\begin{equation}\label{first-statement1}
\sqrt{n}\widehat{A}_m\stackrel{d}{\longrightarrow}\mathcal{N}(0,I_m),\quad\mbox{and}\quad \mathcal{S}_m^{(a)}\stackrel{d}{\longrightarrow}\chi_m^2,
\end{equation}
where $I_m$ stands for the identity matrix of dimension $m$.
\end{proposition}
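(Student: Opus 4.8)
The plan is to reduce everything to a multivariate central limit theorem for the vector of sample products $\bigl(\hat\gamma_a(0),\hat\gamma_a(1),\dots,\hat\gamma_a(m)\bigr)'$, and then to apply the delta method to pass to the normalized statistic $\widehat{A}_m$. Write $p=P(a_t=1)$, so that under the iid hypothesis $E(a_t)=p$, $E(a_t^2)=p$ and $E(a_ta_{t-h})=p^2$ for $h\geq 1$. First I would observe that $\hat\gamma_a(0)=\frac1n\sum_{t=1}^n a_t^2=\frac1n\sum_{t=1}^n a_t\to p$ almost surely (and in probability) by the law of large numbers, and similarly $\hat\gamma_a(h)\to p^2$ for each fixed $h\in\{1,\dots,m\}$, since the $h$-step boundary correction involves only $h$ omitted terms and is $O(h/n)=o(1)$.

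The core step is to establish, for fixed $m$,
\begin{equation*}
\sqrt{n}\begin{pmatrix}\hat\gamma_a(1)-p^2\\ \vdots\\ \hat\gamma_a(m)-p^2\end{pmatrix}\stackrel{d}{\longrightarrow}\mathcal{N}(0,\Sigma),
\end{equation*}
where $\Sigma$ is to be computed. For this I would introduce the stationary, finitely-dependent sequence of $\mathbb{R}^m$-valued vectors $\xi_t=(a_ta_{t-1},\dots,a_ta_{t-m})'$ for $t>m$; since the $a_t$ are iid, $\xi_t$ and $\xi_s$ are independent whenever $|t-s|>m$, so $(\xi_t)$ is $m$-dependent with finite variance, and the CLT for $m$-dependent sequences (Hoeffding–Robbins) applies to $n^{-1/2}\sum_{t=m+1}^n(\xi_t-E\xi_t)$; the boundary terms $t\leq m$ and the discrepancy between $\frac1n\sum_{t=h+1}^n$ and $\frac1n\sum_{t=m+1}^n$ are $O_P(n^{-1/2})$ and vanish in the limit. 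The asymptotic covariance is $\Sigma_{hk}=\sum_{j=-m}^m \mathrm{Cov}(a_ta_{t-h},\,a_{t+j}a_{t+j-k})$. Here the key computation is that, by independence, $\mathrm{Cov}(a_ta_{t-h},a_ta_{t-k})=0$ for $h\neq k$ (the four indices $t,t-h,t,t-k$ — the cross term shares only the index $t$, and $E(a_t^2 a_{t-h}a_{t-k})=p\cdot p\cdot p=p^3=E(a_ta_{t-h})E(a_ta_{t-k})$), while all shifted cross-covariances with $j\neq 0$ also vanish because the index sets $\{t,t-h\}$ and $\{t+j,t+j-k\}$ either are disjoint or overlap in a way that still factorizes; hence $\Sigma$ is diagonal with $\Sigma_{hh}=\mathrm{Var}(a_ta_{t-h})=p^2-p^4=p^2(1-p^2)$.

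Finally I would apply the delta method to the map $(x_0,x_1,\dots,x_m)\mapsto(x_1/x_0,\dots,x_m/x_0)$ at the point $(p,p^2,\dots,p^2)$. Since $\hat\gamma_a(0)\to p$, Slutsky's lemma gives $\sqrt{n}\,\widehat{A}_m=\sqrt{n}\bigl(\hat\gamma_a(1)/\hat\gamma_a(0)-p,\dots\bigr)+o_P(1)$ with the $h$-th coordinate asymptotically equivalent to $\frac{1}{p}\sqrt{n}(\hat\gamma_a(h)-p^2)$ (the terms involving $\sqrt n(\hat\gamma_a(0)-p)$ enter each coordinate multiplied by $p^2/p^2=1$ times $-1/p$, i.e. they contribute a common shift, but note that under the true normalization one checks the Jacobian row for coordinate $h$ is $(-p^2/p^2,\,0,\dots,1/p,\dots,0)=(-1,0,\dots,1/p,\dots)$, so the limiting covariance of the $h$-th coordinate is $\frac{1}{p^2}\Sigma_{hh}+\frac{1}{p^2}\mathrm{Var}(\sqrt n\hat\gamma_a(0))$-type contributions). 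Carrying the Jacobian through, the limiting covariance matrix is $\frac{1}{p^2}\Sigma + (\text{rank-one correction from }\hat\gamma_a(0))$; the rank-one piece must be handled carefully, but because $\mathrm{Cov}(a_ta_{t-h},a_s^2)=\mathrm{Cov}(a_ta_{t-h},a_s)=0$ unless $s\in\{t,t-h\}$ and even then equals $p^2-p^3=p^2(1-p)$, one finds the combination collapses so that the resulting matrix is exactly $\frac{1}{p^2}\cdot p^2(1-p^2)I_m$ — wait, that is not $I_m$. The correct normalization that the proposition implicitly uses must therefore be $\hat\gamma_a(h)/\hat\gamma_a(0)$ behaving like a \emph{correlation}, and indeed rewriting $\hat\gamma_a(h)/\hat\gamma_a(0)$ via the centered quantities $\widetilde\gamma_a(h)=\frac1n\sum(a_t-\bar a)(a_{t-h}-\bar a)$ shows $\hat\gamma_a(h)/\hat\gamma_a(0)=\widetilde\gamma_a(h)/\widetilde\gamma_a(0)+o_P(n^{-1/2})$ is asymptotically the lag-$h$ sample autocorrelation of iid data, whose limit is standard $\mathcal{N}(0,I_m)$ by the classical Bartlett result; the $\chi^2_m$ limit for $\mathcal{S}_m^{(a)}=n\widehat A_m'\widehat A_m$ then follows from the continuous mapping theorem.

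The main obstacle is precisely this last bookkeeping step: one must verify that dividing by $\hat\gamma_a(0)$ (rather than by an estimate of the variance $p(1-p)$) produces exactly the identity covariance, which works because $\hat\gamma_a(0)=\bar a+o_P(1)$ simultaneously estimates $p$ \emph{and}, after centering, $p(1-p)$ appears in the right place — equivalently, the cleanest route is to show $\widehat A_m$ is asymptotically equal to the vector of sample autocorrelations of the iid sequence $(a_t)$ and invoke the standard CLT for sample autocorrelations, sidestepping the delta-method rank-one correction entirely.
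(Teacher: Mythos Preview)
Your proposal has a genuine gap, and it originates in taking the paper's displayed formula for $\hat\gamma_a(h)$ at face value. That formula is a typo: the paper's own proof writes $\hat\gamma_a(0)\stackrel{a.s.}{\to}p(1-p)$ and asserts $n^{1/2}\hat\gamma_a(h)=n^{1/2}\bar\gamma_a(h)+o_p(1)$ with $\bar\gamma_a(h)$ centered at $p$, which is only possible if $\hat\gamma_a(h)=n^{-1}\sum_{t}(a_t-\bar a)(a_{t-h}-\bar a)$ is the \emph{centered} sample autocovariance. Under the uncentered reading you adopt, $\widehat A_m\to(p,\dots,p)'\neq 0$ and the proposition is simply false, so the whole delta-method program you set up cannot reach the stated conclusion.

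Even on its own terms, your covariance computation for the uncentered products is incorrect. You claim the shifted cross-covariances $\mathrm{Cov}(a_ta_{t-h},a_{t+j}a_{t+j-k})$ with $j\neq 0$ vanish ``because the index sets either are disjoint or overlap in a way that still factorizes.'' This fails whenever the two pairs share exactly one index: for instance $j=h$, $k=h$ gives $\mathrm{Cov}(a_ta_{t-h},a_{t+h}a_t)=E(a_t a_{t-h}a_{t+h})-p^4=p^3-p^4\neq 0$, using $a_t^2=a_t$. So $\Sigma$ is neither diagonal nor has diagonal entries $p^2(1-p^2)$, and the rank-one bookkeeping you attempt afterwards cannot be repaired from this starting point. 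Your pivot claim $\hat\gamma_a(h)/\hat\gamma_a(0)=\widetilde\gamma_a(h)/\widetilde\gamma_a(0)+o_P(n^{-1/2})$ is likewise false under the uncentered reading (the two sides have limits $p$ and $0$), so the proposal as written never connects to the $\mathcal N(0,I_m)$ limit.

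Once one uses the intended centered definition, the paper's argument is short and is essentially your ``cleanest route.'' Replace $\bar a$ by $p$ at cost $o_p(n^{-1/2})$; the vector $\bigl((a_t-p)(a_{t-1}-p),\dots,(a_t-p)(a_{t-m}-p)\bigr)'$ is a martingale difference with respect to $\mathcal F_t=\sigma(a_s:s\leq t)$, so the martingale CLT gives asymptotic covariance $\mathrm{diag}\bigl([p(1-p)]^2,\dots,[p(1-p)]^2\bigr)$ --- the off-diagonals vanish because $E[(a_t-p)^2(a_{t-h}-p)(a_{t-k}-p)]=p(1-p)\cdot 0\cdot 0$ for $h\neq k$ --- and dividing by $\hat\gamma_a(0)\to p(1-p)$ via Slutsky yields $I_m$. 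No $m$-dependence CLT, no delta-method Jacobian, and no rank-one correction are needed.
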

The iid assumption is made to detect any short-run dependence structure in the $(a_t)$ sequence using $\mathcal{S}_m^{(a)}$. On the other hand, persistency or long-run effects may be highlighted plotting $\widehat{A}_m$ together with the confidence intervals obtained from (\ref{first-statement1}). 
Note that such a plot is similar to an autocorrelation function (ACF) plot. However, as the $(a_t)$ sequence is categorical, and since we decide that $(a_t)$ is an independent sequence if all the components of $\widehat{A}_m$ are not significant, we prefer to use the term "dependence plots". For instance, the dependence plots of the stationary Lipigas and CLC stocks in Figure \ref{two-ACF}, suggest that the $(a_t)$ sequence is 1-dependent for these two stocks. The reader is referred to Brockwell and Davis (2006), Definition 6.4.3, for the $k$-dependent processes.
We end this section by mentioning the ability of the $\mathcal{S}_m^{(a)}$ statistic to detect any dependence between daily trade/no trade events. The almost sure convergence is denoted by $\stackrel{a.s.}{\longrightarrow}$.

\begin{proposition}\label{iid-test-consist}
Suppose that $(a_t)$ is strictly stationary ergodic, such that $H_1$ holds true. Then, we have $\widehat{A}_m\stackrel{a.s.}{\longrightarrow}C$, where $C$ is a vector of constants with at least a non-zero component.
\end{proposition}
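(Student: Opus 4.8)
The plan is to pin down the almost sure limit of $\widehat A_m$ coordinate by coordinate via Birkhoff's ergodic theorem, and then to use the very definition of $H_1$, together with the fact that $(a_t)$ is $\{0,1\}$-valued, to check that this limiting vector is not the null vector. Throughout I read $\hat\gamma_a(h)$ as the mean-corrected sample autocovariance of $(a_t)$, so that the coordinates of $\widehat A_m$ are the first $m$ sample autocorrelations (the form for which Proposition~\ref{iid-test-prop} holds); I write $p=P(a_t=1)$, $\gamma_a(h)=P(a_ta_{t-h}=1)-p^2$ for $h\ge 1$, and $\gamma_a(0)=p(1-p)$.

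First I would observe that, for each fixed $h\in\{0,1,\dots,m\}$, the vector process $(a_t,a_{t-1},\dots,a_{t-h})_t$ is strictly stationary and ergodic, being a fixed finite-window, time-invariant measurable functional of the strictly stationary ergodic sequence $(a_t)$; consequently the scalar process $(a_ta_{t-h})_t$, obtained from it by a fixed measurable map, is also strictly stationary and ergodic. Since $|a_ta_{t-h}|\le 1$, Birkhoff's ergodic theorem gives $n^{-1}\sum_{t=1+h}^{n}a_ta_{t-h}\stackrel{a.s.}{\longrightarrow}E[a_ta_{t-h}]=P(a_ta_{t-h}=1)$, the $h=O(1)$ omitted initial terms contributing $O(h/n)\to 0$. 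Combined with $\bar a_n=n^{-1}\sum_{t=1}^{n}a_t\stackrel{a.s.}{\longrightarrow}p$ and the expansion of the mean correction, this yields $\hat\gamma_a(h)\stackrel{a.s.}{\longrightarrow}\gamma_a(h)$ for every $0\le h\le m$.

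Next, under $H_1$ the variable $a_t$ cannot be almost surely constant: were it constant, $P(a_ta_{t-h}=1)$ would be $p$ or $0$ and in both cases equal $p^2$, contradicting $H_1$. Hence $p\in(0,1)$ and $\gamma_a(0)=p(1-p)>0$, so that by the continuous mapping theorem (division by a limit bounded away from $0$) $\widehat A_m\stackrel{a.s.}{\longrightarrow}C:=(\gamma_a(1)/\gamma_a(0),\dots,\gamma_a(m)/\gamma_a(0))'$. Finally, $H_1$ furnishes an index $h^\ast\in\{1,\dots,m\}$ with $P(a_ta_{t-h^\ast}=1)\neq p^2$, i.e.\ $\gamma_a(h^\ast)\neq 0$; since $\gamma_a(0)>0$ this forces $C_{h^\ast}\neq 0$, so $C$ has at least one non-zero component, as claimed.

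I do not expect a real obstacle: once strict stationarity and ergodicity are assumed, everything is elementary. The two points that need a little care — both routine — are the disposal of the finitely many end-of-sample terms and of the sample-mean centering when taking the limit of $\hat\gamma_a(h)$, and the remark that $H_1$ does double duty here, simultaneously guaranteeing $\gamma_a(0)>0$ (so the ratio defining $C$ is well posed in the limit) and producing the non-vanishing coordinate of $C$.
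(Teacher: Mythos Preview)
Your proposal is correct and matches the paper's own treatment: the paper simply states that the result is a direct consequence of the Ergodic Theorem and omits the proof, and you have spelled out exactly that argument. Your explicit reading of $\hat\gamma_a(h)$ as mean-corrected is the right one (it is the only reading consistent with Proposition~\ref{iid-test-prop} and its proof), and your observation that $H_1$ also forces $p\in(0,1)$, hence $\gamma_a(0)>0$, is a nice touch that the paper leaves implicit.
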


The proof of Proposition \ref{iid-test-consist} is a direct consequence of the Ergodic Theorem (see Francq and Zako\"{\i}an (2019), Theorem A.2), and is therefore skipped. The $Q_m$ test consists in rejecting $H_0$, if $\mathcal{S}_m^{(a)}>\chi_{m,1-\alpha}^2$, where $\chi_{m,1-\alpha}^2$ is the $(1-\alpha)$th quantile of the $\chi_{m}^2$ distribution.


\subsection{The non-constant zero returns probability case}
\label{tv-case}

Let us first underline, that the statistics based on the $\hat{\gamma}_a(h)$'s, are not adequate for investigating the dependence structure of the $(a_t)$ sequence, when $P(a_t=1)$ is time-varying. Indeed, under Assumption \ref{tv-prob}, and assuming that $(a_t)$ is independent, it can be shown that

\begin{equation}\label{third-cv}
\hat{\gamma}_a(h)\stackrel{a.s.}{\longrightarrow}\int_0^1g^2(s)ds-\left(\int_0^1g(s)ds\right)^2,\quad h>0,
\end{equation}
and using similar arguments to that of the proof of Proposition \ref{tv-test-prop-power} below.
If we assume that $g(\cdot)$ is non-constant, then $\int_0^1g^2(s)ds>\left(\int_0^1g(s)ds\right)^2$ in general. As a consequence, the tools introduced in the above section can lead to a spurious detection of a dependence, even for large lags $h$. In view of the stocks considered in this paper, changes in the non-zero returns probabilities are commonly observed in practice for a variety of facts. For this reason, new tools taking into account the framework given by Assumption \ref{tv-prob} are proposed.

In this part, we wish to test

$$\widetilde{H}_0:g(t/n)g((t-h)/n)=P(a_ta_{t-h}=1),\:\mbox{for all}\:h\in\{1,\dots,m\},\:\mbox{and all}\:t/n\in(0,1],$$
vs.
$$\widetilde{H}_1:\exists\:h\in\{1,\dots,m\},\:\mbox{and}\:0<a<b<1\:\mbox{such that}\:g(t/n)g((t-h)/n)\neq P(a_ta_{t-h}=1),$$
for all $t/n\in[a,b]$, taking $m$ small. In a first step, we suppose that the $P(a_t=1)$'s are known. Later, feasible statistics will be proposed. In view of $\widetilde{H}_1$, a cumulative sums (CUSUM) statistic as

\begin{equation}\label{unfeasible}
\widetilde{A}_m=\left(\sup_{u\in(0,1]}|\tilde{\gamma}_a(1,u)|,\dots,\sup_{u\in(0,1]}|\tilde{\gamma}_a(m,u)|\right)',
\end{equation}
should be used, where

$$\tilde{\gamma}_a(h,u)=(n-h)^{-1}\sum_{t=1+h}^{[nu]}(a_t-P(a_t=1))(a_{t-h}-P(a_{t-h}=1)),$$
and $[\cdot]$ denotes the integer part of a real number. Accordingly to $\widetilde{H}_0$, suppose that $(a_t)$ is independent, and assume that Assumption \ref{tv-prob} holds true. Then we have
\begin{equation}\label{first-statement}
\sqrt{n}\widetilde{A}_m\stackrel{d}{\longrightarrow}\left(\sup_{u\in(0,1]}|G(u)|,\dots,\sup_{u\in(0,1]}|G(u)|\right)',
\end{equation}
as $n\to\infty$, where $G(u)=\int_{0}^{u}\sigma(s)dB(s)$, $B(\cdot)$ is a standard Brownian motion, and $\sigma(s)^2=g^2(s)\left(1-\right.$ $\left.g(s)\right)^2$. The proof of the above result is provided in the Appendix.
It can be seen that the asymptotic distribution of $\widetilde{A}_m$ is non-standard, and have to be approximated using for instance bootstrap methods. Nevertheless, as the supremum is taken, and noting that the time-varying $P(a_t=1)$ should be estimated nonparametrically to make feasible tools, it can be difficult to control the type I error for finite samples. In addition, noting that the variance of $(a_t-P(a_t=1))(a_{t-h}-P(a_{t-h}=1))$ is not constant, the maximum value is more likely to be attained in high variance periods. This could make difficult to detect alternatives for periods where the variance is low.

For all these reasons, tools built using the full sample are considered, although $\tilde{\gamma}_a(h,1)$ leads to compare $\int_{0}^{1}g_h(s)ds$ and $\int_{0}^{1}g^2(s)ds$, where $g_h(s)=\lim_{n\to\infty}P(a_{[sn]}$ $a_{[(s-h/n)n]})$. Indeed, this allows to avoid the drawbacks described above, at the cost of a loss of power in the particular case $\int_{0}^{1}g_h(s)ds=\int_{0}^{1}g^2(s)ds$, with $g_h(\cdot)\neq g^2(\cdot)$. Then, let us introduce

\begin{equation}\label{unfeasible}
\overline{A}_m=\left(\tilde{\gamma}_a(1,1)/\tilde{\gamma}_a(0,1),\dots,\tilde{\gamma}_a(m,1)/\tilde{\gamma}_a(0,1)\right)'.
\end{equation}
In order to test $\int_{0}^{1}g_h(s)ds=\int_{0}^{1}g^2(s)ds$, $1\leq h\leq m$, taking $m$ small, the test statistic

\begin{equation}\label{test-stat-1}
\overline{\mathcal{S}}_m^{(a)}=n\hat{\omega}^{-1}\overline{A}_m'\overline{A}_m,
\end{equation}
where
$$\hat{\omega}:=\frac{n^{-1}\sum_{t=2}^{n}(a_t-P(a_t=1))^2(a_{t-1}-P(a_{t-1}=1))^2}{\left[n^{-1}\sum_{t=1}^{n}(a_t-P(a_t=1))^2\right]^2},$$
can be used. The following propositions give the asymptotic behavior of $\overline{A}_m$.

\begin{proposition}\label{tv-test-prop}
Suppose that the sequence $(a_t)$ is independent, and fulfills Assumption \ref{tv-prob}. Then as $n\to\infty$
$$\sqrt{n}\:\overline{A}_m\stackrel{d}{\longrightarrow}\mathcal{N}(0,\Omega),\quad \overline{\mathcal{S}}_m^{(a)}\stackrel{d}{\longrightarrow}\chi_m^2,$$
where $\Omega=diag(\omega,\dots,\omega)$, and
$$\omega=\frac{\int_0^1g^2(s)(1-g^2(s))^2ds}{\left(\int_0^1g(s)(1-g(s))ds\right)^2}.$$
In addition, we have $\hat{\omega}\stackrel{a.s.}{\longrightarrow}\omega.$
\end{proposition}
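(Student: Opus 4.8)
\emph{Proof strategy.} Put $e_t=a_t-g(t/n)$, so that under the hypotheses of the proposition $(e_t)$ is a row-wise array of independent, $[-1,1]$-valued, mean-zero random variables with $E[e_t^2]=\sigma^2(t/n)$, where $\sigma^2(u):=g(u)\big(1-g(u)\big)$ is bounded, piecewise Lipschitz on $(0,1]$ by Assumption \ref{tv-prob}, and bounded away from $0$. Since the $P(a_t=1)$ are treated as known here, $\tilde{\gamma}_a(0,1)=n^{-1}\sum_{t=1}^n e_t^2$, $\tilde{\gamma}_a(h,1)=(n-h)^{-1}\sum_{t=1+h}^n e_t e_{t-h}$ for $h\ge1$, and $\hat{\omega}$ is the ratio of $n^{-1}\sum_{t=2}^n e_t^2 e_{t-1}^2$ to $\big(n^{-1}\sum_{t=1}^n e_t^2\big)^2$. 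With $D:=\int_0^1\sigma^2(s)\,ds>0$ and $N:=\int_0^1\sigma^2(s)^2\,ds>0$, the goal is to prove $\sqrt{n}\,\overline{A}_m\stackrel{d}{\longrightarrow}\mathcal{N}(0,(N/D^2)I_m)$ together with $\hat{\omega}\stackrel{a.s.}{\longrightarrow}N/D^2$, which identifies $\omega=N/D^2$ and $\Omega=\mathrm{diag}(\omega,\dots,\omega)$. I would combine three ingredients.

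The backbone is a law of large numbers for the rescaled array: for bounded piecewise Lipschitz $f$, fixed distinct integers $j_1,\dots,j_r\ge0$ and exponents $\alpha_i\in\{1,2\}$, $n^{-1}\sum_t f(t/n)\prod_{i=1}^r e_{t-j_i}^{\alpha_i}\stackrel{a.s.}{\longrightarrow}\int_0^1 f(s)\prod_{i=1}^r\mu_{\alpha_i}(s)\,ds$, where $\mu_1\equiv0$ and $\mu_2=\sigma^2$. Indeed, centering each summand by its deterministic mean yields a bounded, mean-zero, $K$-dependent sequence ($K=\max_i j_i-\min_i j_i$ fixed); splitting it into $K$ sub-sequences of independent variables and applying an exponential inequality for bounded sums together with Borel--Cantelli shows the centered average vanishes almost surely, while $n^{-1}\sum_t f(t/n)\prod_i\mu_{\alpha_i}((t-j_i)/n)$ is a Riemann sum tending to the stated integral, the only slightly delicate point being the $O(1/n)$-neighbourhoods of the finitely many break points of $g$, which contribute $O(1/n)$. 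In particular $\tilde{\gamma}_a(0,1)\stackrel{a.s.}{\longrightarrow}D$, and applying the statement to the numerator and denominator of $\hat{\omega}$ gives $\hat{\omega}\stackrel{a.s.}{\longrightarrow}N/D^2$.

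For the numerators, set $\zeta_t=(e_t e_{t-1},\dots,e_t e_{t-m})'$ for $t>m$ and $\mathcal{F}_t=\sigma(a_1,\dots,a_t)$, a filtration not depending on $n$. Since $e_t$ is independent of $\mathcal{F}_{t-1}$ and mean zero, $E[\zeta_t\mid\mathcal{F}_{t-1}]=0$, so $(\zeta_t,\mathcal{F}_t)$ is a vector martingale difference array, and I would invoke a multivariate central limit theorem for such arrays (see, e.g., Francq and Zako\"{\i}an (2019)). The conditional Lindeberg condition holds automatically because $\|\zeta_t\|\le\sqrt{m}$, while $\big(E[\zeta_t\zeta_t'\mid\mathcal{F}_{t-1}]\big)_{hk}=\sigma^2(t/n)\,e_{t-h}e_{t-k}$; by the law of large numbers above, $n^{-1}\sum_{t=m+1}^n E[\zeta_t\zeta_t'\mid\mathcal{F}_{t-1}]$ converges to $N\,I_m$, the diagonal entries tending to $\int_0^1\sigma^2(s)^2\,ds=N$ and the off-diagonal ones ($h\neq k$, hence $E[e_{t-h}e_{t-k}]=0$) to $0$. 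Thus $n^{-1/2}\sum_{t=m+1}^n\zeta_t\stackrel{d}{\longrightarrow}\mathcal{N}(0,N\,I_m)$.

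Finally, replacing $(n-h)^{-1}$ by $n^{-1}$ and the lower index $1+h$ by $m+1$ changes each numerator of $\overline{A}_m$ by $o_P(n^{-1/2})$ (at most $m$ bounded terms, and $\sqrt{n}/(n-h)=n^{-1/2}(1+o(1))$), so $\sqrt{n}\,\overline{A}_m=\tilde{\gamma}_a(0,1)^{-1}\,n^{-1/2}\sum_{t=m+1}^n\zeta_t+o_P(1)$; since $\tilde{\gamma}_a(0,1)\stackrel{a.s.}{\longrightarrow}D>0$, Slutsky's lemma gives $\sqrt{n}\,\overline{A}_m\stackrel{d}{\longrightarrow}\mathcal{N}(0,\Omega)$, and then $\overline{\mathcal{S}}_m^{(a)}=\hat{\omega}^{-1}n\,\overline{A}_m'\overline{A}_m\stackrel{d}{\longrightarrow}\chi_m^2$ follows from $\hat{\omega}\stackrel{a.s.}{\longrightarrow}\omega$ and the continuous mapping theorem. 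I expect the main obstacle to be the convergence of the conditional covariance matrix in the martingale CLT: this is where the triangular-array law of large numbers, the martingale structure, and the cancellation of the $h\neq k$ cross terms all come together, and where the merely piecewise (rather than globally) Lipschitz character of $g$ must be handled with care.
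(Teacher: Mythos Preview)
Your proof is correct and follows the same overall architecture as the paper: a CLT for the vector of numerators $(\tilde{\gamma}_a(1,1),\dots,\tilde{\gamma}_a(m,1))'$, a strong law for the denominator $\tilde{\gamma}_a(0,1)$ and for $\hat{\omega}$, and then Slutsky's lemma. The implementation differs in the choice of limit theorems. The paper invokes directly a CLT for \emph{independent} heterogeneous sequences (Davidson (1994), Theorem~23.6) for the numerators and Kolmogorov's SLLN for independent non-identically distributed summands for the denominator and $\hat{\omega}$, passing from the summand variances to the integral $\int_0^1\sigma^4(s)\,ds$ via Riemann sums. You instead recognise that the products $e_te_{t-h}$ (and $e_t^2e_{t-1}^2$) are only $m$-dependent across $t$, not independent, and accordingly use a martingale-difference CLT for $\zeta_t$ together with a $K$-dependent strong law established by splitting into independent sub-sequences and Borel--Cantelli. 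Your route is technically tighter on precisely this dependence point, at the price of having to verify convergence of the conditional covariance $n^{-1}\sum_t\sigma^2(t/n)e_{t-h}e_{t-k}$---which, as you note, is where your auxiliary LLN does the real work. The paper's shortcut of treating the summands as independent gives the right limit but is, strictly speaking, a slight over-simplification that your argument avoids.
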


\begin{proposition}\label{tv-test-prop-power}
Suppose that Assumption \ref{tv-prob} holds true.
Under $\widetilde{H}_1$ with $\int_{0}^{1}g_h(s)ds\neq\int_{0}^{1}g^2(s)ds$, then as $n\to\infty$, $\overline{A}_m\stackrel{a.s.}{\longrightarrow}\overline{C}$, where $\overline{C}$ is a vector of constants with at least a non-zero component.
\end{proposition}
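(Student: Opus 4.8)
Set $g_0:=g$, so that $\tilde{\gamma}_a(0,1)=n^{-1}\sum_{t=1}^{n}(a_t-g(t/n))^2$, and put
$$c_0:=\int_0^1 g(s)(1-g(s))\,ds,\qquad c_h:=\int_0^1 g_h(s)\,ds-\int_0^1 g^2(s)\,ds\quad(1\leq h\leq m).$$
Since $0<g(\cdot)<1$ on $(0,1]$ we have $c_0>0$. The plan is to prove that
$$\tilde{\gamma}_a(h,1)\stackrel{a.s.}{\longrightarrow}c_h\qquad\text{for every }h\in\{0,1,\dots,m\},$$
and then to finish as follows: the map $(x_0,\dots,x_m)\mapsto(x_1/x_0,\dots,x_m/x_0)$ is continuous at $(c_0,\dots,c_m)$ because $c_0\neq0$, so the continuous mapping theorem yields $\overline{A}_m\stackrel{a.s.}{\longrightarrow}\overline{C}:=(c_1/c_0,\dots,c_m/c_0)'$; and by the hypothesis of the proposition there is at least one $h\in\{1,\dots,m\}$ with $\int_0^1 g_h(s)\,ds\neq\int_0^1 g^2(s)\,ds$, i.e. $c_h\neq0$, hence $c_h/c_0\neq0$ and $\overline{C}$ has a non-zero component.

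To get the a.s. convergence of $\tilde{\gamma}_a(h,1)$ I would first peel off the deterministic part. Writing $\mu_t^{(h)}:=E[(a_t-g(t/n))(a_{t-h}-g((t-h)/n))]=P(a_ta_{t-h}=1)-g(t/n)g((t-h)/n)$ for $h\geq1$ and $\mu_t^{(0)}:=g(t/n)(1-g(t/n))$, decompose $\tilde{\gamma}_a(h,1)=D_n(h)+Z_n(h)$ with $D_n(h):=(n-h)^{-1}\sum_{t=1+h}^{n}\mu_t^{(h)}$ and $Z_n(h)$ the centered remainder. For the deterministic term, Assumption \ref{tv-prob} gives $g((t-h)/n)=g(t/n)+O(h/n)$ uniformly in $t$ except on an index set of size $O(h)$ sitting astride the breakpoints of $g$ — negligible after division by $n-h$ — so that $D_n(h)$ equals, up to $o(1)$, the Riemann sum $(n-h)^{-1}\sum_{t=1+h}^{n}\big(P(a_ta_{t-h}=1)-g(t/n)^2\big)$; invoking the rescaling convention $P(a_{[sn]}a_{[sn]-h}=1)\to g_h(s)$ and the (piecewise) continuity that makes $g_h-g^2$ Riemann integrable, $D_n(h)\to\int_0^1(g_h(s)-g^2(s))\,ds=c_h$, and likewise $D_n(0)\to\int_0^1 g(s)(1-g(s))\,ds=c_0$.

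The core of the argument is to show $Z_n(h)\stackrel{a.s.}{\longrightarrow}0$. The summands $\eta_t:=(a_t-g(t/n))(a_{t-h}-g((t-h)/n))-\mu_t^{(h)}$ are centered and uniformly bounded, $|\eta_t|\leq2$. When $(a_t)$ is independent — the setting behind equation (\ref{third-cv}), and also the situation for $h=0$ under $\widetilde{H}_1$, where $\eta_t$ is an affine function of the single variable $a_t$ — one partitions $\{1+h,\dots,n\}$ into the $h+1$ residue classes modulo $h+1$; within a class any two indices differ by more than $h$, so the corresponding $\eta_t$ depend on disjoint blocks of the $a$'s and are independent, and Kolmogorov's strong law for independent uniformly bounded variables forces each class-average to $0$ a.s., while $Z_n(h)$ is a finite combination of these class-averages with Cesàro weights tending to $1/(h+1)$. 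Under $\widetilde{H}_1$ the sequence $(a_t)$ is necessarily dependent (independence would force $P(a_ta_{t-h}=1)=g(t/n)g((t-h)/n)$), so for $h\geq1$ this splitting no longer produces independent summands; here I would instead use the weak-dependence structure carried by $(a_t)$ — the same mixing/ergodic-type hypothesis that already underpins Proposition \ref{iid-test-consist} — to obtain $\mathrm{Var}(Z_n(h))=O(n^{-1})$, hence $Z_n(h)\to0$ a.s. by Borel--Cantelli along $n=k^2$ together with the bound $|\eta_t|\leq2$ to fill the gaps $k^2\leq n<(k+1)^2$. This dependent-array strong law is the main obstacle: the Riemann-sum computation of $D_n(h)$ and the independent case of $Z_n(h)$ are routine, whereas almost-sure (as opposed to merely $L^2$) control of the centered term under dependence is exactly where the implicit regularity on $(a_t)$ is needed.
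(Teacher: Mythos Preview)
Your approach is the same as the paper's in spirit: establish almost-sure convergence of each $\tilde{\gamma}_a(h,1)$ by a strong law together with Riemann-sum identification of the deterministic limit, then divide and apply continuous mapping. The paper's proof is three lines --- it invokes the Kolmogorov SLLN for independent, non-identically-distributed variables, writes down the limit, and concludes.

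Your version is in fact more careful than the paper's on two points. First, the limit you compute, $c_h=\int_0^1 g_h(s)\,ds-\int_0^1 g^2(s)\,ds$, is what the centering by the true probabilities $g(\cdot/n)$ actually produces; the paper writes $\int_0^1 g_h(s)\,ds-\bigl(\int_0^1 g(s)\,ds\bigr)^2$, which is the expression one gets from sample-mean centering (as in (\ref{third-cv})) and should be read as a slip, since only your form matches the hypothesis $\int_0^1 g_h\neq\int_0^1 g^2$ stated in the proposition. Second, and more substantively, you flag explicitly that under $\widetilde{H}_1$ the sequence $(a_t)$ is dependent, so the independent-variable Kolmogorov SLLN cited by the paper does not apply as written. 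The paper does not address this, and Assumption~\ref{tv-prob} by itself carries no dependence restriction under the alternative; your proposed remedy (an implicit mixing/ergodic-type hypothesis on $(a_t)$ plus a Borel--Cantelli subsequence argument for the centered part) is the natural fix and goes beyond what the paper supplies.

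One small correction to your sketch: the $h=0$ case under $\widetilde{H}_1$ is \emph{not} in the independent camp. The summands $\eta_t$ are indeed affine functions of the single variable $a_t$, but the $a_t$'s themselves are dependent under $\widetilde{H}_1$, so the $\eta_t$'s inherit that dependence and require the same weak-dependence treatment as the $h\geq1$ terms.
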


In view of $\widetilde{H}_0$, the process $(a_t)$ is assumed independent but not necessarily identically distributed in Proposition \ref{tv-test-prop}. In addition, it can be seen that taking the full sample in $\overline{A}_m$, we only need to apply a classical Heteroscedasticity Consistent (HC) correction to handle a time-varying $P(a_t=1)$. Considering $\sqrt{n}\hat{\omega}^{-\frac{1}{2}}\overline{A}_m$, the usual $(-1,96;1,96)$ bounds can be used to assess the dependence horizon of $(a_t)$.
For a small $m$, and considering a fixed asymptotic level $\alpha$, the $\overline{Q}_m$ test rejects $\widetilde{H}_0$ if $\overline{\mathcal{S}}_m^{(a)}>\chi_{m,1-\alpha}^2$, where we recall that $\chi_{m,1-\alpha}^2$ is the $(1-\alpha)$th quantile of the $\chi_m^2$ distribution. Note that the critical values for the statistic based on $\overline{A}_m$ are standard, on the contrary to the cumulative sums in $\widetilde{A}_m$. If $P(a_t=1)$ is constant, then $\omega=1$, and we retrieve the result of Proposition \ref{iid-test-prop}. Alternatively, taking a large $m$, a plot of $\overline{A}_m$ can be examined to analyze persistency or long-run effects in $(a_t)$. Proposition \ref{tv-test-prop-power} shows the ability of our tools to detect some dependence for $(a_t)$, but provided that $\int_{0}^{1}g_h(s)ds\neq\int_{0}^{1}g^2(s)ds$.\\

We now consider a feasible statistic for analyzing the dynamics of the $(a_t)$ process. Let us introduce first the kernel estimator of the non-constant probability

\begin{equation*}\label{nonparam}
\widehat{P(a_t=1)}=\sum_{i=1}^nw_{ti}a_i,
\end{equation*}
    with $w_{ti}=\left(\sum_{j=1}^nK_{tj}\right)^{-1}K_{ti}$, and
    $$K_{ti}=\left\{
                  \begin{array}{c}
                    K((t-i)/nb)\quad \mbox{if}\quad t\neq i\\
                    0  \quad\mbox{if}\quad t=i,\\
                  \end{array}
                \right.$$
where $K(\cdot)$ is a kernel function on the real line, and $b$ is the bandwidth fulfilling the following conditions.

\begin{assumption}[Kernel and bandwidth]\label{k-b}
\begin{itemize}
\item[(a)] $K(\cdot)$ is a continuous kernel function defined on the real line with compact support, such that $0\leq \sup_z K(z)<R$ for some finite real number $R$, and $\int_{-\infty}^{\infty}K(z)dz=1$.
\item[(b)] As $n\to\infty$, $nb^4+\frac{1}{nb^2}\to0$.
\end{itemize}
\end{assumption}

Plugin the above kernel estimator in (\ref{unfeasible}), we get the adaptive dependence structure estimation for the $m$ first lags:

\begin{equation}\label{unfeasible}
\check{A}_m=\left(\check{\gamma}_a(1,1)/\check{\gamma}_a(0,1),\dots,\check{\gamma}_a(m,1)/\check{\gamma}_a(0,1)\right)',
\end{equation}
where

$$\check{\gamma}_a(h,1)=(n-h)^{-1}\sum_{t=1+h}^{n}(a_t-\widehat{P(a_t=1)})(a_{t-h}-\widehat{P(a_{t-h}=1)}).$$
The following proposition states the asymptotic equivalence between $\overline{A}_m$ and $\check{A}_m$.

\begin{proposition}\label{equiv-prop}
Suppose that $(a_t)$ is independent, and such that Assumption \ref{tv-prob} and \ref{k-b} hold true. Then, as $n\to\infty$ we have,
$\sqrt{n}\left(\overline{A}_m-\check{A}_m\right)=o_p(1).$
\end{proposition}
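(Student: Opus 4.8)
The plan is to reduce the statement to a bound on the difference between the infeasible and feasible autocovariance estimators and then to control that difference through a bias--variance decomposition of the kernel estimation error.

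\textbf{Step 1 (reduction).} Write $N_h=\tilde\gamma_a(h,1)$, $\widehat N_h=\check\gamma_a(h,1)$, $\Delta_h=\widehat N_h-N_h$, $D=\tilde\gamma_a(0,1)$ and $\widehat D=\check\gamma_a(0,1)$, so that the $h$-th components of $\overline A_m$ and $\check A_m$ are $N_h/D$ and $\widehat N_h/\widehat D$, and
\begin{equation*}
\frac{N_h}{D}-\frac{\widehat N_h}{\widehat D}=\frac{N_h\,\Delta_0-D\,\Delta_h}{D\,\widehat D}.
\end{equation*}
By the arguments in the proof of Proposition \ref{tv-test-prop} one has $D\stackrel{a.s.}{\longrightarrow}\int_0^1g(s)(1-g(s))ds>0$; since independence of $(a_t)$ entails $\widetilde{H}_0$, the same proposition gives $\sqrt n\,N_h=O_p(1)$. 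Hence it will suffice to show $\Delta_0=o_p(1)$ (which also yields $\widehat D\stackrel{p}{\longrightarrow}\int_0^1g(s)(1-g(s))ds$) together with $\sqrt n\,\Delta_h=o_p(1)$ for $h=1,\dots,m$, and Slutsky's lemma then concludes. Note that only the consistency of $\widehat D$, and not a rate, is required.

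\textbf{Step 2 (decomposition of $\Delta_h$).} Put $e_t=a_t-P(a_t=1)$ and, using $\sum_iw_{ti}=1$, write the estimation error as $\delta_t:=\widehat{P(a_t=1)}-P(a_t=1)=S_t+\beta_t$, with the centred term $S_t=\sum_iw_{ti}e_i$ and the deterministic bias $\beta_t=\sum_iw_{ti}g(i/n)-g(t/n)$; the leave-one-out convention $w_{tt}=0$ makes $S_t$ independent of $e_t$. Expanding the products in $\check\gamma_a(h,1)-\tilde\gamma_a(h,1)$ and substituting $\delta=S+\beta$ splits $\Delta_h$ into a deterministic average $(n-h)^{-1}\sum_t\beta_t\beta_{t-h}$, four mixed averages of the type $(n-h)^{-1}\sum_te_t\beta_{t-h}$ and $(n-h)^{-1}\sum_t\beta_tS_{t-h}$ (and their symmetric counterparts), and three quadratic forms $(n-h)^{-1}\sum_te_tS_{t-h}$, $(n-h)^{-1}\sum_tS_te_{t-h}$, $(n-h)^{-1}\sum_tS_tS_{t-h}$; the same expansion with $h=0$ (where only the $-2e_t\delta_t$ and $\delta_t^2$ terms occur) delivers $\Delta_0=o_p(1)$.

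\textbf{Step 3 (bounds).} Everything rests on the weight estimates implied by Assumption \ref{k-b}(a): compact support forces $w_{ti}=0$ for $|t-i|>Cnb$ and $\sum_jK_{tj}\asymp nb$, hence $\max_iw_{ti}=O((nb)^{-1})$, $\sum_iw_{ti}^2=O((nb)^{-1})$, $\sum_tw_{ti}=O(1)$ and $\sum_t\sum_iw_{ti}w_{(t-h)i}=O(b^{-1})$. For the deterministic average, the Lipschitz property of $g$ gives $|\beta_t|=O(b)$ uniformly, so $n^{-1}\sum_t\beta_t^2=O(b^2)$ and $(n-h)^{-1}\sum_t\beta_t\beta_{t-h}=O(b^2)=o(n^{-1/2})$ by $nb^4\to0$. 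Each mixed average is a sum of centred terms linear in the $e_i$'s, and a second-moment computation using the weight estimates shows each is $O_p(b\,n^{-1/2})=o_p(n^{-1/2})$. For a quadratic form, e.g.\ $(n-h)^{-1}\sum_tS_tS_{t-h}=(n-h)^{-1}\sum_{i,j}\big(\sum_tw_{ti}w_{(t-h)j}\big)e_ie_j$, one isolates the diagonal $i=j$, whose expectation is $O((nb)^{-1})=o(n^{-1/2})$ by $nb^2\to\infty$ with fluctuation of smaller order, and bounds the remaining degenerate bilinear form by its variance $O(n^{-2}b^{-1})$, i.e.\ it is $O_p(n^{-1}b^{-1/2})=o_p(n^{-1/2})$ by $nb\to\infty$; the forms $(n-h)^{-1}\sum_te_tS_{t-h}$ and $(n-h)^{-1}\sum_tS_te_{t-h}$ are treated identically, their diagonal reducing to a single index. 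Summing the bounds gives $\sqrt n\,\Delta_h=o_p(1)$ for $h\ge1$ and $\Delta_0=o_p(1)$, and Step 1 closes the argument.

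\textbf{Main obstacle.} The quadratic-form estimates in Step 3 are routine once the weight bounds are in hand; the only subtlety there is that the leave-one-out convention keeps $e_t$ independent of $S_t$ but \emph{not} of $S_{t-h}$ for $1\le h\le Cnb$, so the diagonal of $(n-h)^{-1}\sum_te_tS_{t-h}$ cannot be discarded by independence and must be extracted. The real difficulty is ensuring that the kernel bias is negligible at the $n^{-1/2}$ scale, i.e.\ that $n^{-1}\sum_t\beta_t^2=O(b^2)$ rather than merely $o(1)$; this is where the smoothness imposed on $g$ by Assumption \ref{tv-prob} interacts with the bandwidth rate $nb^4\to0$ of Assumption \ref{k-b}(b), and it requires a dedicated treatment of the $b$-neighbourhoods of the finitely many break points of $g$.
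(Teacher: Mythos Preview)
Your approach is the same as the paper's: split the kernel error $\widehat{P(a_t=1)}-P(a_t=1)$ into a stochastic part $S_t$ (the paper's $\hat p_t-\bar p_t$) and a deterministic bias $\beta_t$ (the paper's $\bar p_t-p_t$), establish the orders $O_p((nb)^{-1/2})$ and $O(b)$ for these, and then expand the difference $\check\gamma_a(h,1)-\tilde\gamma_a(h,1)$ into the same eight cross-terms. Your exposition is in fact more complete than the paper's---you handle the ratio structure $N_h/D$ versus $\widehat N_h/\widehat D$ explicitly in Step~1 and actually carry out the second-moment computations for the quadratic forms in Step~3, where the paper simply lists the eight terms and writes ``using (\ref{hatbar}) and (\ref{eq11}), the desired result follows''; your closing remark on break points is also apt, since the paper disposes of discontinuities with the clause ``we assume that the function $g(\cdot)$ is continuous, without a loss of generality.''
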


We are now in position to introduce the following feasible test statistic
\begin{equation}\label{test-stat-2}
\check{\mathcal{S}}_m^{(a)}=n\check{\omega}^{-1}\check{A}_m'\check{A}_m,
\end{equation}
where

$$\check{\omega}:=\frac{n^{-1}\sum_{t=2}^{n}(a_t-\widehat{P(a_t=1)})^2(a_{t-1}-\widehat{P(a_{t-1}=1)})^2}{\left[n^{-1}\sum_{t=1}^{n}(a_t-\widehat{P(a_t=1)})^2\right]^2}.$$
From Proposition \ref{tv-test-prop} and \ref{equiv-prop}, at the asymptotic level $\alpha$, $\widetilde{H}_0$ is rejected if $\check{\mathcal{S}}_m^{(a)}>\chi_{m,1-\alpha}^2$. The corresponding test is denoted by $\check{Q}_m$.

\section{Numerical experiments}
\label{num-exp}

In this section, we first study the finite sample behaviors of the different tools introduced in the paper by means of Monte Carlo experiments. Then the real data taken from the Santiago stock exchange will be considered.

\subsection{Monte Carlo experiments}
\label{MC}

\subsubsection{Empirical size}
\label{ES}

We simulated $N=5000$ independent trajectories of independent $(a_t)$ according to the following cases:

\begin{itemize}
  \item[1-] The probability is constant: $P(a_t=1)=0.6$.
  \item[2-] The probability  is time-varying: $P(a_t=1)=g(t/n)$ where $g(r)=0.4$ for $(0,0.4]$, $g(r)=2r-0.4$ for $(0.4,0.6]$, and $g(r)=0.8$ for $(0.6,1]$.
\end{itemize}
The sample sizes are $n=200, 400, 800$. The nominal asymptotic level of the tests is $\alpha=5\%$. Tables \ref{testcstprob} and \ref{testtvprob} display the outputs of the $Q_m$ and $\check{Q}_m$ tests. Tables \ref{depcstprob} and \ref{deptvprob} give the relative frequencies where the components of $\widehat{A}_m$ and $\check{A}_m$ are outside its 95\% confidence bounds.

From Tables \ref{testcstprob} and \ref{depcstprob}, it can be seen that the adaptive $\check{Q}_m$ and $\check{A}_m$ have similar results to those of the $Q_m$ and $\widehat{A}_m$ when $P(a_t=1)$ is constant and $(a_t)$ is independent. This can be explained by the fact that all the tools presented in the paper are all valid in the case 1. Nevertheless, from Tables \ref{testtvprob} and \ref{deptvprob}, we can notice that the $Q_m$ and $\widehat{A}_m$ do not display satisfactory outputs as the relative rejection frequencies tend to 100\% as the sample size is increased. In particular, for large $h$, $\widehat{A}_m$ can spuriously suggest the existence of long memory effects for the trade/no trade sequence. Clearly this is the consequence of the non-constant zero return probability. In contrast, we can see that the adaptive $\check{Q}_m$ and $\check{A}_m$ have a good control of the type I error in general.

\subsubsection{Empirical power}
\label{EP}

We simulated $N=5000$ of the following sequence inspired by Romano and Thomb (1996): $a_t=\dot{a}_t\dot{a}_{t-1}$, where $(\dot{a}_t)$ is an iid sequence such that $P(\dot{a}_t=1)=0.6$ is constant. Note that in view of the empirical size part, the comparison is fair as $P(a_t=1)$ is constant. For conciseness, we only display the outputs for the $\check{Q}_m$ and $Q_m$ tests in Table \ref{testcstpower}. The results for the $\widehat{A}_m$ and $\check{A}_m$ lead to similar results. When the sample is small, we can note some loss of power of the
$\check{Q}_m$ when compared to the $Q_m$ test. This can be explained by the non-parametric estimation of $P(a_t=1)$. However, the abilities of detecting a dependence structure are similar for sample sizes commonly encountered in practice.

\subsection{Real data analysis}
\label{real data}

The dependence structure of different stocks taken from the Santiago Stock market are investigated. Recall that the behaviors of $P(a_t=1)$ for these stocks are described in the Introduction (in particular see Figures \ref{one}-\ref{four}). The sample sizes and the empirical means of the $a_t$'s are given in Table \ref{p2pp}. The effects of smooth long run changes and abrupt breaks are illustrated considering the Conchatoro, Cencosud, Security, Provida, Cruzados, Blanco y Negro stocks, see Figure \ref{one-ACF}. The stationary $(a_t)$ case is studied using the Lipigas and CLC stocks, see Figure \ref{two-ACF}. 

From Figure \ref{one-ACF}, it can be seen that the $\widehat{A}_m$ lead to detect long-run dependence in the $(a_t)$ process. Nevertheless, it is likely that such a long-run dependence is spurious as different kinds of non-constant $P(a_t=1)$ can be observed for these stocks. Conversely, when the time-varying $P(a_t=1)$ is adequately taken into account by using the $\overline{A}_m$, the dependence structure seems only short-run. Let us now study the stationary Lipigas and CLC stocks. From Figure \ref{two-ACF}, it emerges that when $P(a_t=1)$ seems constant, the $\widehat{A}_m$ and the $\overline{A}_m$ lead to the same conclusion: the dependence structure is again short-run. Note that for all the stocks which outputs are displayed in Figures \ref{one-ACF} and \ref{two-ACF}, the $\overline{Q}_5$ rejects the independence hypothesis at the 5\% level whether the $P(a_t=1)$ seems constant or not (not displayed here). 

\section{Conclusion}
\label{concl}
Determining the relevant past of the daily price changes/no change categorical sequence $(a_t)$ may be of interest for financial times series modelling. As an example, note that the $(a_t)$ sequence shares the clustering properties of the volatility in many cases, see Figures \ref{one}-\ref{four}. Hence, our tools could help to decide if $(a_t)$ can be used as a covariate in some volatility models, see e.g. Francq and Thieu (2019) for the GARCH-X models. Also, past values of such a series are considered to specify the conditional probability of price changes in Moysiadis and Fokianos (2014). Nevertheless, spurious persistency or long-run effects assessments for the daily price changes/no change sequence may be avoided taking into account a potential non-stationary behavior in the data. It is found that the dependence structure of $(a_t)$ is short-run in general. 

\section*{References}
\begin{description}
\item[]{\sc Brockwell, P.J., and Davis, R.A.} (2006) \textit{Times Series: Theory and Methods}. 2nd edition, Springer, New York.
\item[]{\sc Cavaliere, G., and Taylor, A.M.R.} (2007) Time-transformed unit-root tests for models with non-stationary volatility. \textit{Journal of Time Series Analysis} 29, 300-330.
\item[]{\sc Cavaliere, G., and Taylor, A.M.R.} (2008) Bootstrap Unit Root Tests for Time Series with Nonstationary Volatility. \textit{Econometric Theory} 24, 43-71.
\item[]{\sc Dahlhaus, R.} (1997) Fitting time series models to nonstationary processes. \textit{Annals of Statistics} 25, 1-37.
\item[]{\sc Davidson, J.} (1994) \textit{Stochastic limit theory.} Oxford University Press. New York.
\item[]{\sc Engle, R.F., and Rangel, J.G.} (2008) The spline GARCH model for unconditional volatility and its global macroeconomic causes. \textit{Review of Financial Studies} 21, 1187-1222.
\item[]{\sc Francq, C. and Thieu, L.Q.} (2019) QML inference for volatility models with covariates. \textit{Econometric Theory} 35, 37-72.
\item[]{\sc Francq, C., and Zako\"{i}an, J-M.} (2019) \textit{GARCH models : structure, statistical inference, and financial applications}. Wiley.
\item[]{\sc Hansen, B.E.} (1992) Convergence to stochastic integrals for dependent heterogeneous processes. \textit{Econometric Theory} 8, 489-500.
\item[]{\sc Lesmond, D.A.} (2005) Liquidity of emerging markets. \textit{Journal of Financial Economics} 77, 411--452.
\item[]{\sc Mikosch, T., and St\u{a}ric\u{a}, C.} (2004) Nonstationarities in financial time series, the long-
    range dependence, and the IGARCH effects. \textit{Review of Economics and Statistics} 86, 378-390.
\item[]{\sc Moysiadis, T., and Fokianos, K.} (2014) On binary and categorical time series models with feedback. \textit{Journal of Multivariate Analysis} 131, 209-228.
\item[]{\sc Patilea, V., and Ra\"{i}ssi, H.} (2013) Corrected portmanteau tests for VAR models with time-varying variance. \textit{Journal of Multivariate Analysis} 116, 190-207.
\item[] {\sc Patilea, V., and Ra\"{i}ssi, H.} (2014) Testing second order dynamics for autoregressive processes in presence of time-varying variance. \textit{Journal of the American Statistical Association} 109, 1099-1111.
\item[]{\sc Phillips, P.C.B.} (1987) Time series regression with a unit root. \textit{Econometrica} 55, 277-301.
\item[]{\sc Phillips, P.C.B., and Xu, K.L.} (2006) Inference in autoregression under heteroskedasticity. \textit{Journal of Time Series Analysis} 27, 289-308.
\item[]{\sc Romano, J. P., and Thombs, L. A.} (1996) Inference for autocorrelations under weak assumptions. \textit{Journal of the American Statistical Association} 91, 590-600.
\item[]{\sc Sen, P. K., and Singer, J. M.} (1993) \textit{Large Sample Methods In Statistics}. Chapman \& Hall.
\item[]{\sc St\u{a}ric\u{a}, C., and Granger, C.} (2005) Nonstationarities in stock returns. \textit{Review of Economics and Statistics} 87, 503-522.
\item[]{\sc Wang, S., Zhao, Q., and Li, Y.} (2019) Testing for no-cointegration under time-varying variance. \textit{Economics Letters} 182, 45-49.
\item[]{\sc Xu, K.L., and Phillips, P.C.B.} (2008) Adaptive estimation of autoregressive models with time-varying
    variances. \textit{Journal of Econometrics} 142, 265-280.
\end{description}

\newpage

\section*{Proofs}

\begin{proof}[Proof of Proposition \ref{iid-test-prop}]
Firstly, note that we have $\bar{a}-P(a_t=1)=O_p(n^{-\frac{1}{2}})$, so that

\begin{equation*}\label{rome}
n^{\frac{1}{2}}\hat{\gamma}_a(h)=n^{\frac{1}{2}}\bar{\gamma}_a(h)+o_p(1),\:\mbox{for all}\:h\in\{1,\dots,m\},
\end{equation*}
where $\bar{\gamma}_a(h)=(n-h)^{-1}\sum_{t=1+h}^{n}(a_t-P(a_t=1))(a_{t-h}-P(a_t=1))$. Let us define $\overline{A}_m=(\bar{\gamma}_a(1),\dots,\bar{\gamma}_a(m))'$. From the Central Limit Theorem (CLT) for martingale difference sequences (see Theorem A.3 in Francq and Zako\"{\i}an (2019)), we have

$$n^{\frac{1}{2}}\overline{A}_m\stackrel{d}{\longrightarrow}\mathcal{N}(0,\overline{\Sigma}),$$
where $\overline{\Sigma}$ is a $m\times m$ dimensional diagonal matrix, with diagonal component $P(a_t=1)^2(1-P(a_t=1))^2$. On the other hand, it is easy to see that $\hat{\gamma}_a(0)\stackrel{a.s.}{\longrightarrow}P(a_t=1)(1-P(a_t=1))$. Hence, the result follows from the Slutsky Lemma.
\end{proof}


\begin{proof}[Proof of (\ref{first-statement})]
Let $\xi_{h,t}=(a_t-P(a_t=1))(a_{t-h}-P(a_{t-h}=1))$ and

\begin{equation*}
\widetilde{\Gamma}_m(u)=\left(\tilde{\gamma}_a(1,u),\dots,\tilde{\gamma}_a(m,u)\right)'.
\end{equation*}
The sequence $(\xi_{h,t})$ is a martingale difference, such that $V(\xi_{h,t})=g^2(t/n)(1-g(t/n))^2+O(n^{-1})$, from the Lipschitz condition with a finite number of breaks in Assumption \ref{tv-prob}. Then from
Theorem 2.1 of Hansen (1992), we obtain

$$\widetilde{\Gamma}_m(u)\stackrel{d}{\longrightarrow}\left(G(u),\dots,G(u)\right)'.$$
The desired result follows from the Continuous Mapping Theorem.
\end{proof}

\begin{proof}[Proof of Proposition \ref{tv-test-prop}]
Note that $E\{(a_t-P(a_t=1))(a_{t-h}-P(a_{t-h}=1))\}=0$. Then using the Central Limit Theorem for independent but heterogeneous sequences, see Davidson (1994), Theorem 23.6, we have

$$\sqrt{n}\left(\tilde{\gamma}_a(1,1),\dots,\tilde{\gamma}_a(m,1)\right)'\stackrel{d}{\longrightarrow}N(0,\varpi),$$
where $\varpi=\int_0^1g^2(s)(1-g^2(s))^2ds$ is obtained using some computations, and since $(a_t)$ is independent. On the other hand, from the Kolmogorov SLLN for independent but non-identically random variables (see, for instance,  Sen and Singer (1993), Theorem 2.3.10), we have

$$\tilde{\gamma}_a(h,1)\stackrel{a.s.}{\longrightarrow}\int_0^1g(s)(1-g(s))ds.$$
Hence, the first result of Proposition \ref{tv-test-prop} follows from the Slutsky Lemma. Now, for the convergence of $\hat{\omega}$, using again the Kolmogorov SLLN and some computations, we have

$$n^{-1}\sum_{t=2}^{n}(a_t-P(a_t=1))^2(a_{t-1}-P(a_{t-1}=1))^2\stackrel{a.s.}{\longrightarrow}\int_0^1g^2(s)(1-g^2(s))^2ds,$$

$$n^{-1}\sum_{t=1}^{n}\left(a_t-P(a_t=1)\right)^2\stackrel{a.s.}{\longrightarrow}\int_0^1g(s)(1-g(s))ds.$$
\end{proof}

\begin{proof}[Proof of Proposition \ref{tv-test-prop-power}]
From the Kolmogorov SLLN for independent but non identically random variables (see, Sen and Singer (1993), Theorem 2.3.10), and using some computations, we have

$$\tilde{\gamma}_a(h,1)\stackrel{a.s.}{\longrightarrow}\int_0^1g_h(s)ds-\left(\int_0^1g(s)ds\right)^2,$$
from the dominated convergence Theorem, and for any $h\in\{1,\dots,m\}$. Hence, under $\widetilde{H}_1$ the result follows.
\end{proof}

\begin{proof}[Proof of Proposition \ref{equiv-prop}]
In this proof similar arguments to that of the proof of Theorem 2 in Xu and Phillips (2008) are considered. As the break number is finite, we assume that the function $g(\cdot)$ is continuous, without a loss of generality. Let us introduce the short notations

$$\hat{p}_t=\sum_{i=1}^{n}w_{ti}a_i,\bar{p}_t=\sum_{i=1}^{n}w_{ti}p_i,\quad p_t=P(a_t=1),$$
and $z_i=a_i-p_i$. We can write

$$\hat{p}_t-\bar{p}_t=\frac{\frac{1}{nb}\sum_{i=1}^{n}K_{ti}z_i}{\frac{1}{nb}\sum_{j=1}^{n}K_{tj}}.$$
From $\widetilde{H}_0$, $(z_i)$ is an independent process, such that $E(z_i)=0$. In addition, from Lemma A(c) in Xu and Phillips (2008), we have $\left(1/nb\right)\sum_{i=1}^{n}K_{ti}\to1$. In view of the above arguments, deduce that

\begin{eqnarray*}
E\left(\frac{1}{nb}\sum_{i=1}^{n}K_{ti}z_i\right)^2
&=&\frac{1}{(nb)^2}\sum_{i=1}^{n}K_{ti}^2E(z_i^2)\\
&\leq&\left(\frac{1}{nb}\right)\left(sup_iK_{ti}\right)\left(\frac{1}{nb}\sum_{i=1}^{n}K_{ti}\right)\\
&=&O\left(\frac{1}{nb}\right),
\end{eqnarray*}
Thus, we write

\begin{equation}\label{hatbar}
\hat{p}_t-\bar{p}_t=O_p\left(\frac{1}{\sqrt{nb}}\right).
\end{equation}
On the other hand, we have

\begin{eqnarray*}
&&\frac{1}{nb}\sum_{i=1}^nK_{[nr]i}p_i\\&=&\frac{1}{nb}\sum_{i=1}^{n}K\left(\frac{[nr]-i}{nb}\right)g\left(\frac{i}{n}\right)
\\&=&\frac{1}{b}\left[\int_{1/n}^{2/n}K\left(\frac{[nr]-[ns]}{nb}\right)g\left(\frac{[ns]}{n}\right)ds+\dots\right.\\
&&+\left.\int_{1}^{(n+1)/n}K\left(\frac{[nr]-[ns]}{nb}\right)g\left(\frac{[ns]}{n}\right)ds\right]\\&&
=\frac{1}{b}\left(\int_{1/n}^{(n+1)/n}K\left(\frac{[nr]-ns}{nb}\right)g\left(s\right)ds\right)+O\left(\frac{1}{nb}\right)\\&&
\stackrel{z=(s-r)/b}{=}\int_{(\frac{1}{n}-r)/b}^{(1+\frac{1}{n}-r)/b}
K\left(\frac{[nr]-n(r+bz)}{nb}\right)g\left(r+bz\right)dz+O\left(\frac{1}{nb}\right)\\&&
=\int_{-\infty}^{\infty}
K\left(\frac{[nr]-nr}{nb}-z\right)g\left(r+bz\right)dz+O\left(\frac{1}{nb}\right),
\end{eqnarray*}
where the last equality is obtained for small enough $b$, and since a compact support is assumed for $K(\cdot)$ in Assumption \ref{k-b}(a). Using the Lipschitz condition in Assumption \ref{tv-prob}, deduce that

\begin{equation}\label{eq11}
\frac{1}{nb}\sum_{i=1}^nK_{[nr]i}p_i=g(r)+O(b)+O\left(\frac{1}{nb}\right).
\end{equation}

Now, writing

\begin{eqnarray*}
&&n^{-\frac{1}{2}}\sum_{t=1+h}^{n}(a_t-\hat{p}_t)(a_{t-h}-\hat{p}_{t-h})-
n^{-\frac{1}{2}}\sum_{t=1+h}^{n}(a_t-p_t)(a_{t-h}-p_{t-h})\\&=&
n^{-\frac{1}{2}}\sum_{t=1+h}^{n}(a_t-p_t)(p_{t-h}-\bar{p}_{t-h})
+n^{-\frac{1}{2}}\sum_{t=1+h}^{n}(a_t-p_t)(\bar{p}_{t-h}-\hat{p}_{t-h})\\&+&
n^{-\frac{1}{2}}\sum_{t=1+h}^{n}(p_t-\bar{p}_t)(a_{t-h}-p_{t-h})
+n^{-\frac{1}{2}}\sum_{t=1+h}^{n}(p_t-\bar{p}_t)(p_{t-h}-\bar{p}_{t-h})\\&+&
n^{-\frac{1}{2}}\sum_{t=1+h}^{n}(p_t-\bar{p}_t)(\bar{p}_{t-h}-\hat{p}_{t-h})
+n^{-\frac{1}{2}}\sum_{t=1+h}^{n}(\bar{p}_{t}-\hat{p}_{t})(a_{t-h}-p_{t-h})\\&+&
n^{-\frac{1}{2}}\sum_{t=1+h}^{n}(\bar{p}_{t}-\hat{p}_{t})(p_{t-h}-\bar{p}_{t-h})
+n^{-\frac{1}{2}}\sum_{t=1+h}^{n}(\bar{p}_{t}-\hat{p}_{t})(\bar{p}_{t-h}-\hat{p}_{t-h}),
\end{eqnarray*}
and using (\ref{hatbar}) and (\ref{eq11}), the desired result follows.
\end{proof}


\newpage

\section*{Tables and Figures}

\begin{table}[hh]\!\!\!\!\!\!\!\!\!\!
\begin{center}
\caption{\small{The sample sizes $n$ and $\bar{a}=n^{-1}\sum_{t=1}^{n}a_t$ for different stocks taken from the Chilean stock market.}}
\footnotesize{\begin{tabular}{|c|c|c|}
\cline{2-3}
 \multicolumn{1}{c|}{ } & n & $\bar{a}$   \\
\hline
\tiny{CONCHATORO} & 5188 & 0.83 \\ \hline
\tiny{CENCOSUD} & 5198 &  0.89 \\ \hline\hline
\tiny{SECURITY} & 5188 & 0.62 \\ \hline
\tiny{PROVIDA} & 5179 & 0.59  \\ \hline\hline
\tiny{CRUZADOS} & 2584 & 0.29 \\ \hline
\tiny{BN} & 1958 & 0.48 \\ \hline\hline
\tiny{LIPIGAS} & 896 & 0.57 \\ \hline
\tiny{CLC} & 1958 & 0.49 \\ \hline
\end{tabular}}
\label{p2pp}
\end{center}
\end{table}

\begin{table}[hh]\!\!\!\!\!\!\!\!\!\!
\begin{center}
\caption{\small{The relative rejections frequencies of the tests for the independence of the $(a_t)$ sequence for 5 lags. The zero returns probability is constant with independent $(a_t)$.}}
\footnotesize{\begin{tabular}{|c|c|c|c|}
\hline
 $n$ & 200 & 400 & 800 \\
\hline
$Q_m$ & 4.56 & 4.70 & 4.90\\ \hline
$\check{Q}_m$ & 4.88 & 4.76 & 4.94\\ \hline
\end{tabular}}
\label{testcstprob}
\end{center}
\end{table}

\begin{table}[hh]\!\!\!\!\!\!\!\!\!\!
\begin{center}
\caption{\small{The same as above but for a time-varying zero returns probability with independent $(a_t)$.}}
\footnotesize{\begin{tabular}{|c|c|c|c|}
\hline
 $n$ & 200 & 400 & 800 \\
\hline
$Q_m$ & 81.74 & 98.42 & 100.00\\ \hline
$\check{Q}_m$ & 4.88 & 5.56 & 5.62 \\ \hline
\end{tabular}}
\label{testtvprob}
\end{center}
\end{table}

\begin{table}[hh]\!\!\!\!\!\!\!\!\!\!
\begin{center}
\caption{\small{The relative rejections frequencies of the tests for the independence of the $(a_t)$ sequence for 5 lags. The zero returns probability is constant with dependent $(a_t)$.}}
\footnotesize{\begin{tabular}{|c|c|c|c|c|}
\hline
 $n$ & 100 & 200 & 400 & 800 \\
\hline
$Q_m$ & 85.82 & 99.92 & 100.00 & 100.00\\ \hline
$\check{Q}_m$ &75.54 & 99.56 & 100.00 & 100.00\\ \hline
\end{tabular}}
\label{testcstpower}
\end{center}
\end{table}

\begin{table}[hh]\!\!\!\!\!\!\!\!\!\!
\begin{center}
\caption{\small{The relative frequencies for the components of $A_m$ and $A_m$ being outside the 95\% asymptotic confidence bounds. The constant probability case.}}
\footnotesize{\begin{tabular}{|c|c|c|c|c|c|c||c|c|c|c|}
\cline{2-10}
 \multicolumn{1}{c|}{ }&$h$ & 1 & 2 & 3 & 4 & 5 & 20 & 40 & 60 \\
\hline
\multirow{2}{*}{$n=200$}&
$\widehat{A}_m$ & 4.38 & 4.28 & 4.86 & 4.52 & 4.28 &  4.02 & 2.64 & 1.78 \\
&$\check{A}_m$ & 4.64 & 4.24 & 5.08 & 4.64 & 4.48 & 4.16 & 2.90 & 1.88 \\ \hline
\multirow{2}{*}{$n=400$}&
$\widehat{A}_m$ & 4.84 & 4.98 & 4.68 & 5.06 & 4.70 & 4.66 & 3.66 & 3.30 \\
&$\check{A}_m$ & 5.02 & 5.18 & 4.76 & 5.00 & 4.82 & 4.62 & 3.80 & 3.44 \\ \hline
\multirow{2}{*}{$n=800$}&
$\widehat{A}_m$ & 5.06 & 4.62 & 5.18 & 4.88 & 4.80 & 4.70 & 4.32 & 4.08\\
&$\check{A}_m$ & 5.10 & 4.76 & 5.24 & 4.84 & 4.84 & 5.02 & 4.34 & 4.14 \\ \hline
\end{tabular}}
\label{depcstprob}
\end{center}
\end{table}

\begin{table}[hh]\!\!\!\!\!\!\!\!\!\!
\begin{center}
\caption{\small{The same as above, but for the time-varying probability case.}}
\footnotesize{\begin{tabular}{|c|c|c|c|c|c|c||c|c|c|c|}
\cline{2-10}
 \multicolumn{1}{c|}{ }&$h$ & 1 & 2 & 3 & 4 & 5 & 20 & 40 & 60\\
\hline
\multirow{2}{*}{$n=200$}&
$\widehat{A}_m$ & 51.00 & 51.12 & 49.76 & 50.42 & 49.66 & 34.80 & 9.98 & 1.16\\
&$\check{A}_m$ & 5.24 & 4.66 & 4.64 & 4.66 & 4.58 & 3.84 & 3.62 & 2.48\\ \hline
\multirow{2}{*}{$n=400$}&
$\widehat{A}_m$ & 79.02 & 78.80 & 78.00 & 79.52 & 78.16 & 73.28 & 61.78 & 42.16 \\
&$\check{A}_m$ & 5.48 & 5.32 & 5.08 & 5.42 & 5.24 & 4.60 & 4.06 & 4.02\\ \hline
\multirow{2}{*}{$n=800$}&
$\widehat{A}_m$ & 97.06 &97.34 &97.14 &97.04 &97.20 & 96.10 &95.28 &92.70\\
&$\check{A}_m$ &  5.44 &5.42 &5.40 &5.34 &4.82 & 5.14 &4.92 &4.88\\ \hline
\end{tabular}}
\label{deptvprob}
\end{center}
\end{table}

\begin{figure}[h]\!\!\!\!\!\!\!\!\!\!
\vspace*{4cm}

\protect \includegraphics{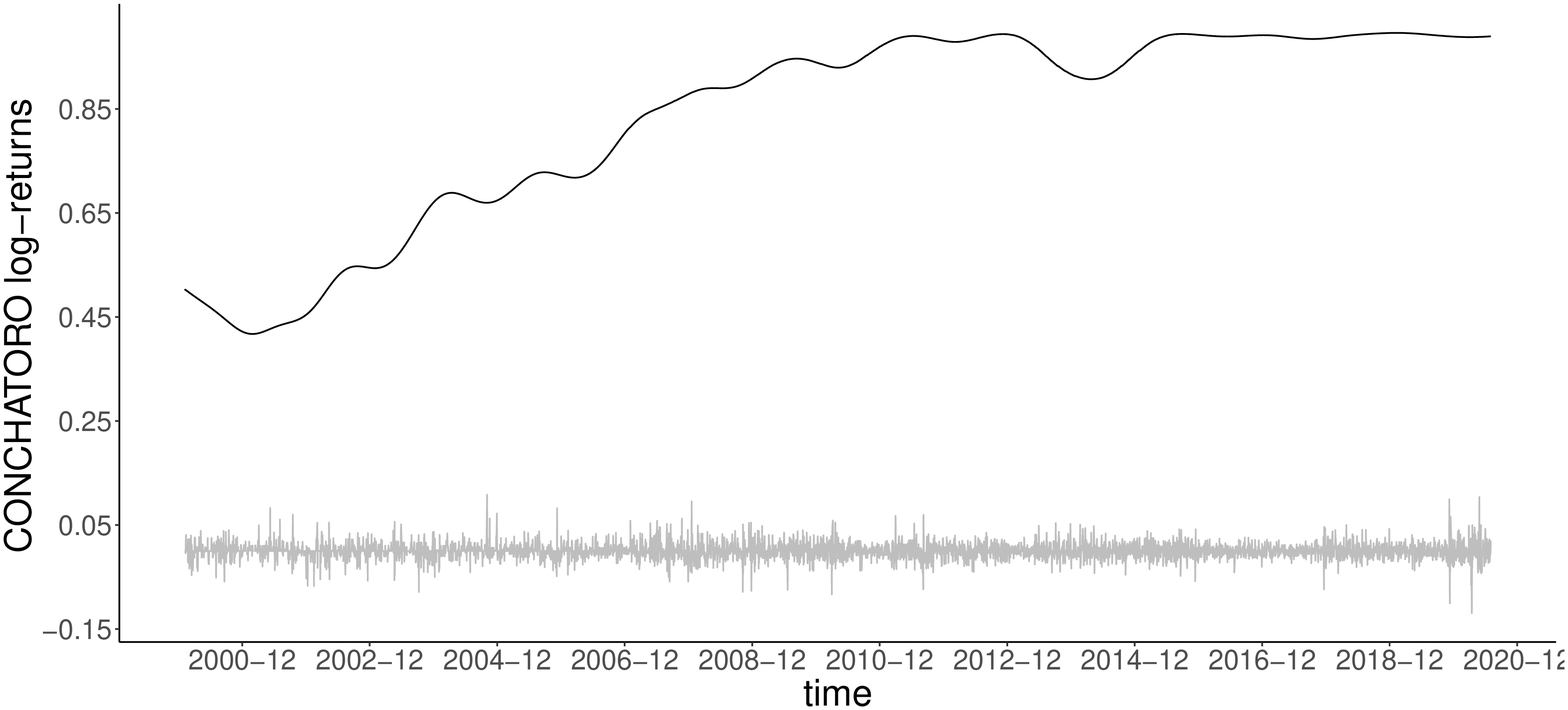}
\protect \includegraphics{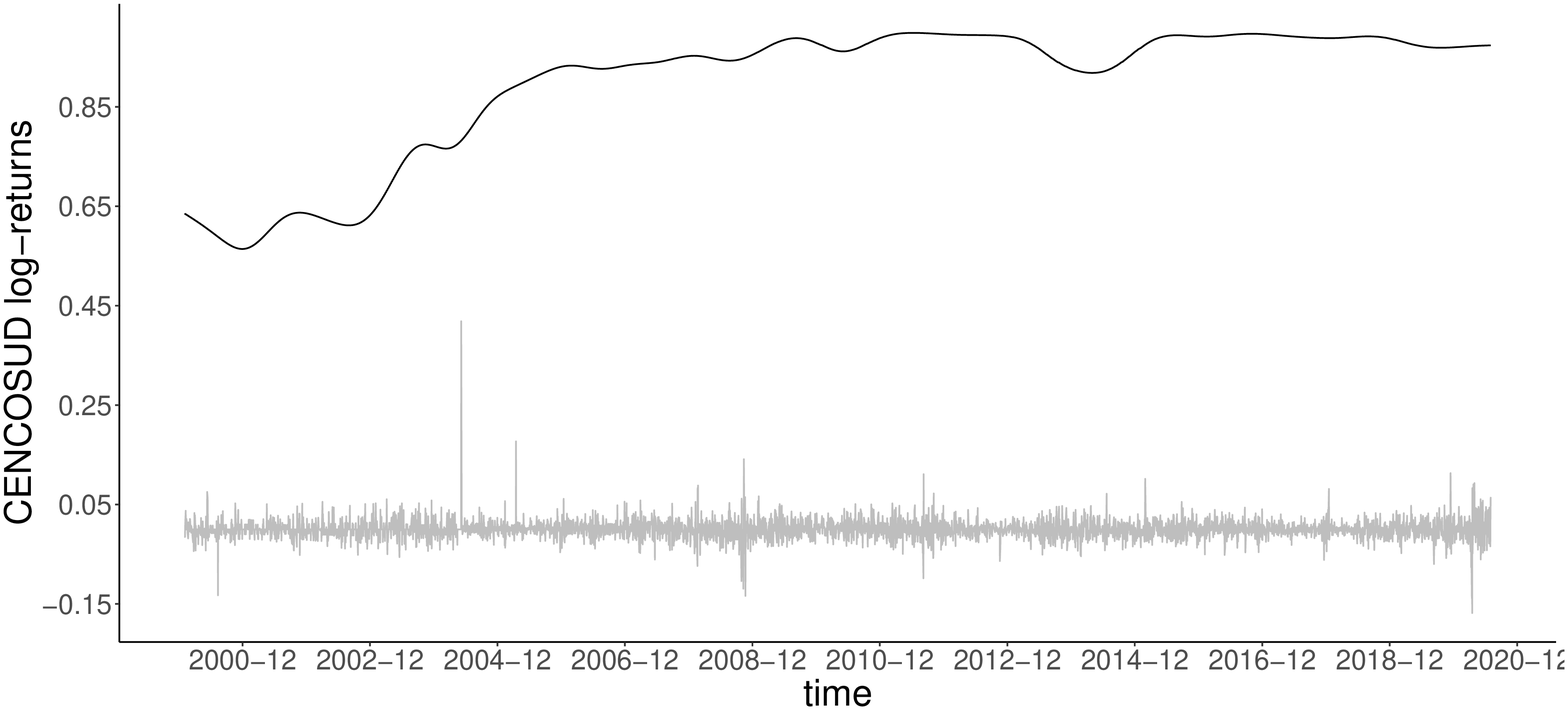}
\caption{\label{one}
{\footnotesize The log-returns of the Conchatoro and Cencosud stocks. A smooth increasing non-stationary behavior in the 2000's can be observed. The kernel smoothing of the $a_t$ values are displayed in full line. Data source: Yahoo Finance.}}
\vspace*{2cm}
\end{figure}

\begin{figure}[h]\!\!\!\!\!\!\!\!\!\!
\protect \includegraphics{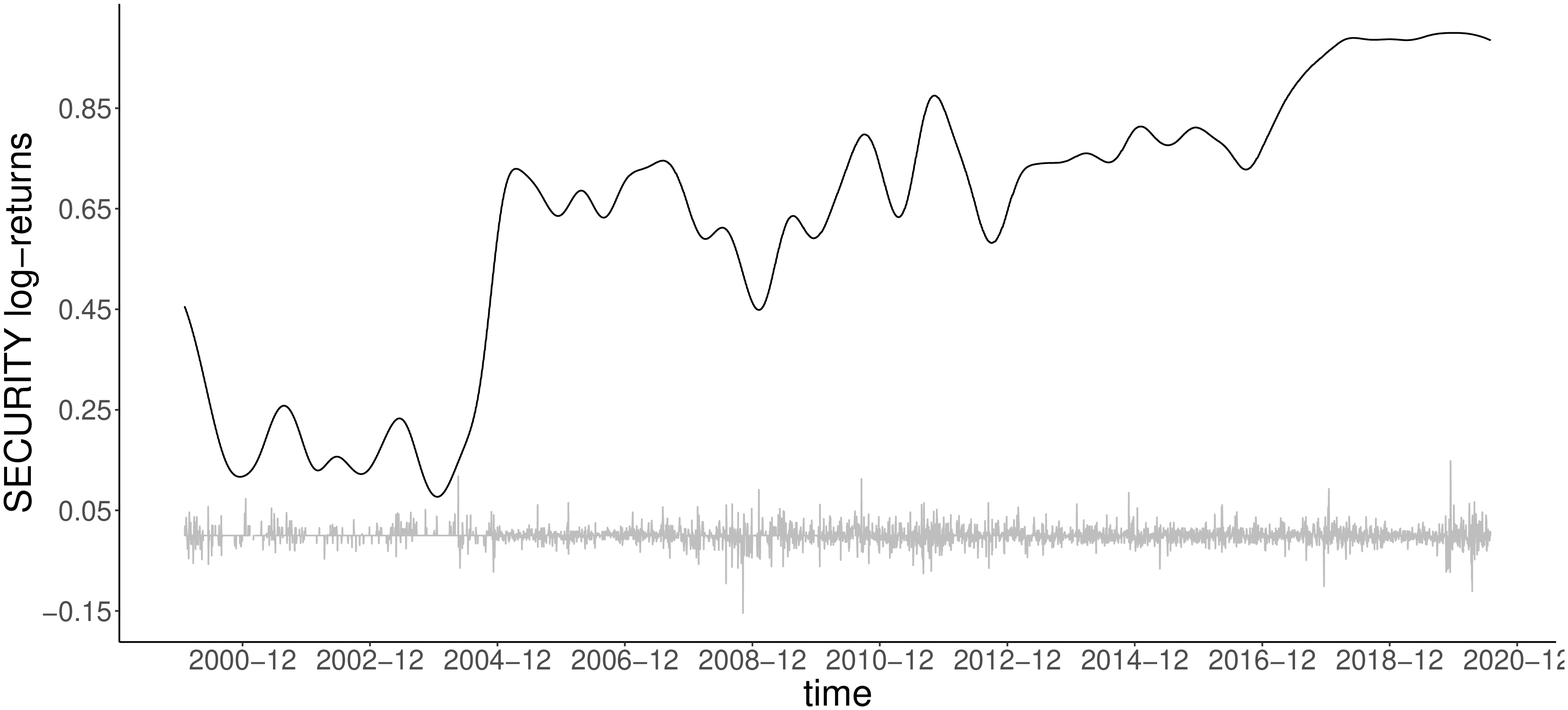}
\protect \includegraphics{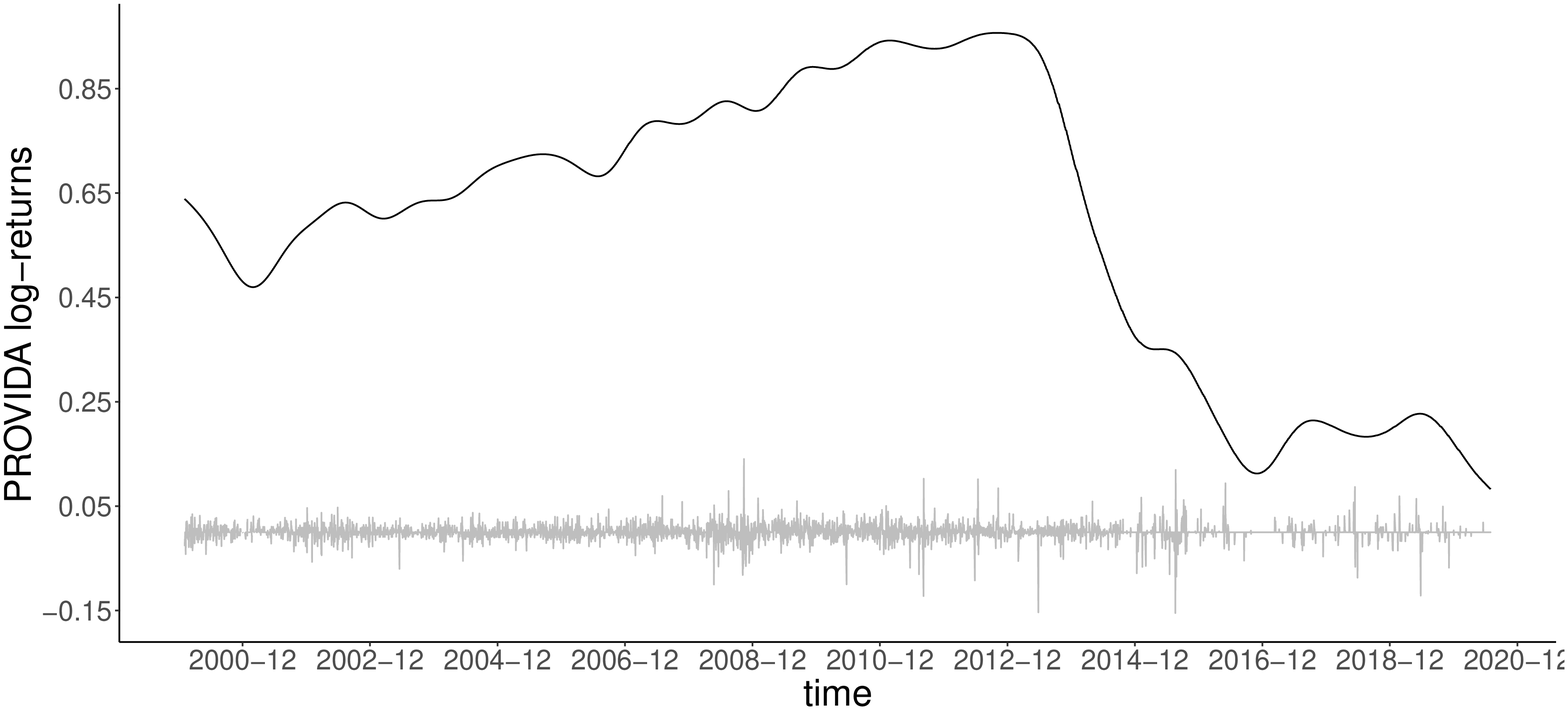}
\caption{\label{two}
{\footnotesize The same as for Figure \ref{one}, but for the Security and Provida stocks. The liquidity levels seem to display abrupt breaks due to specific events in the company's histories.}}
\vspace*{2.5cm}
\end{figure}

\clearpage

\vspace*{3cm}
\begin{figure}[h]\!\!\!\!\!\!\!\!\!\!
\protect \includegraphics{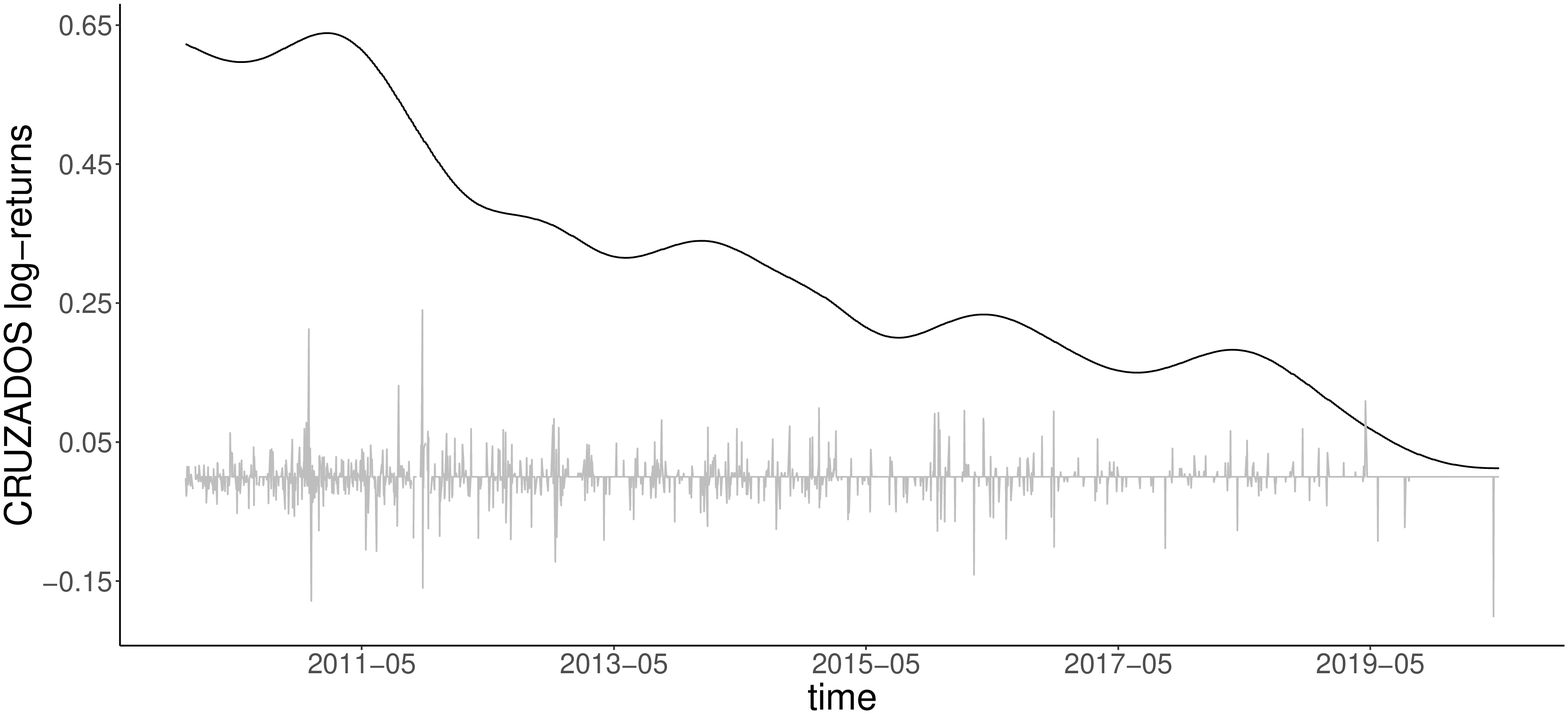}
\protect \includegraphics{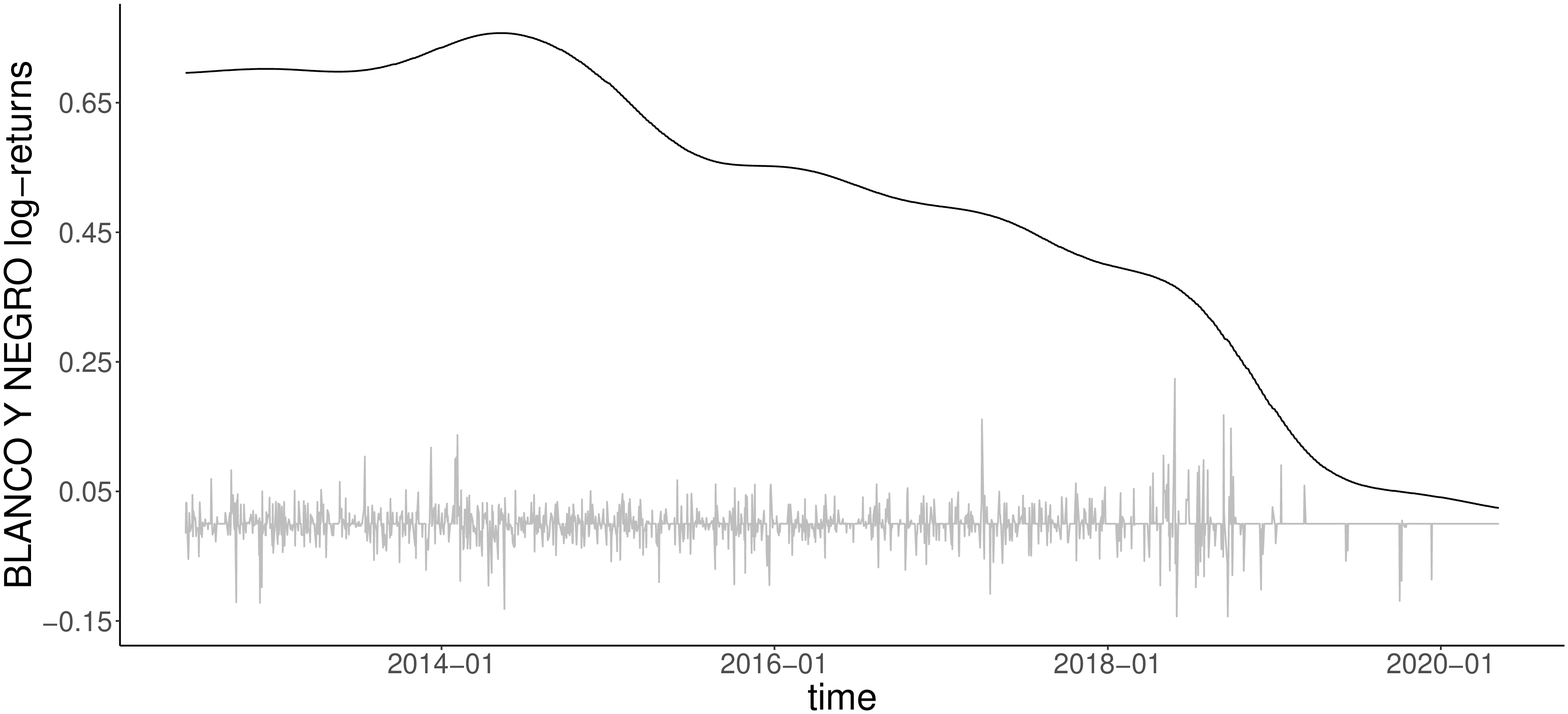}
\caption{\label{three}
{\footnotesize The same as for Figure \ref{one}, but for the Blanco y Negro and Cruzados stocks. The liquidity levels seem to have a smooth decreasing behavior. }}
\vspace*{2.5cm}
\end{figure}

\vspace*{3cm}
\begin{figure}[h]\!\!\!\!\!\!\!\!\!\!
\protect \includegraphics{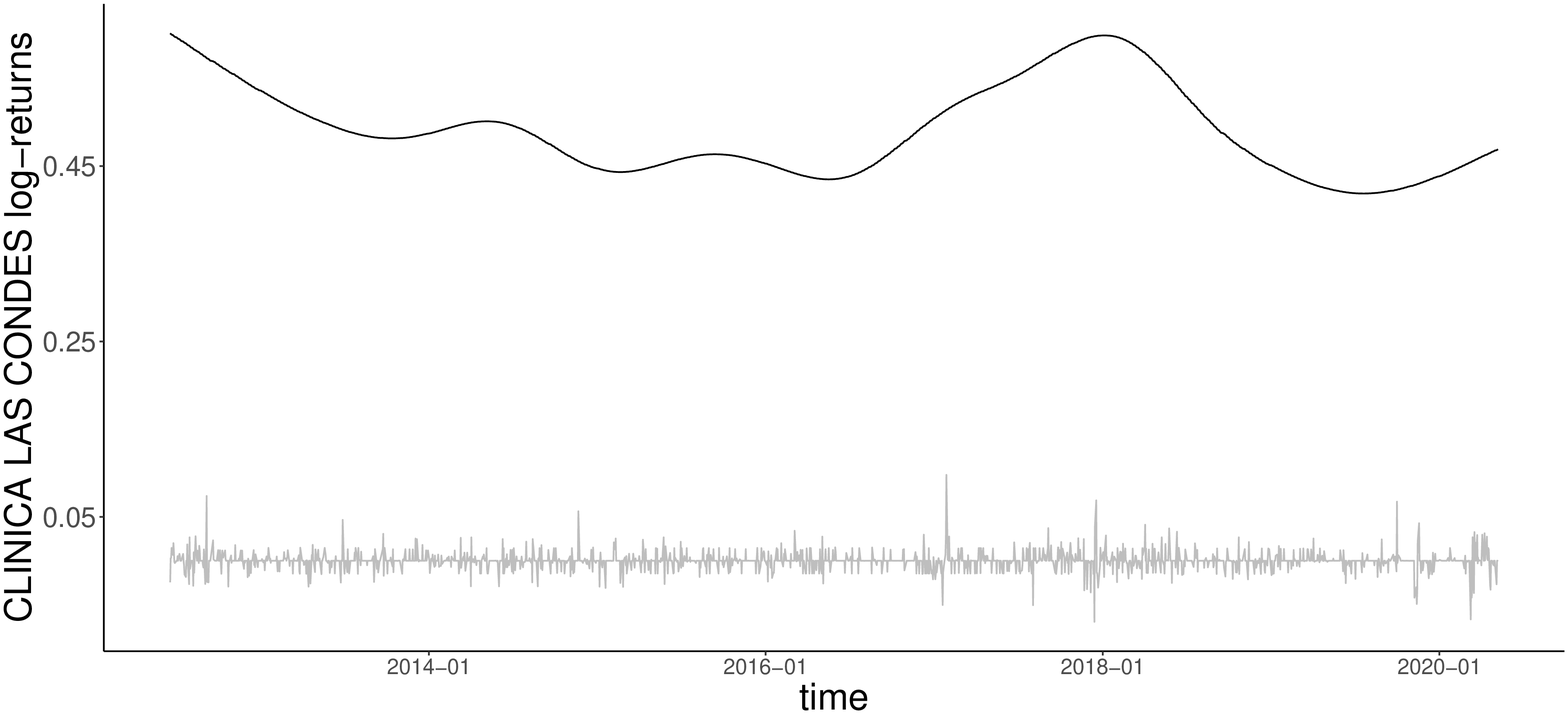}
\protect \includegraphics{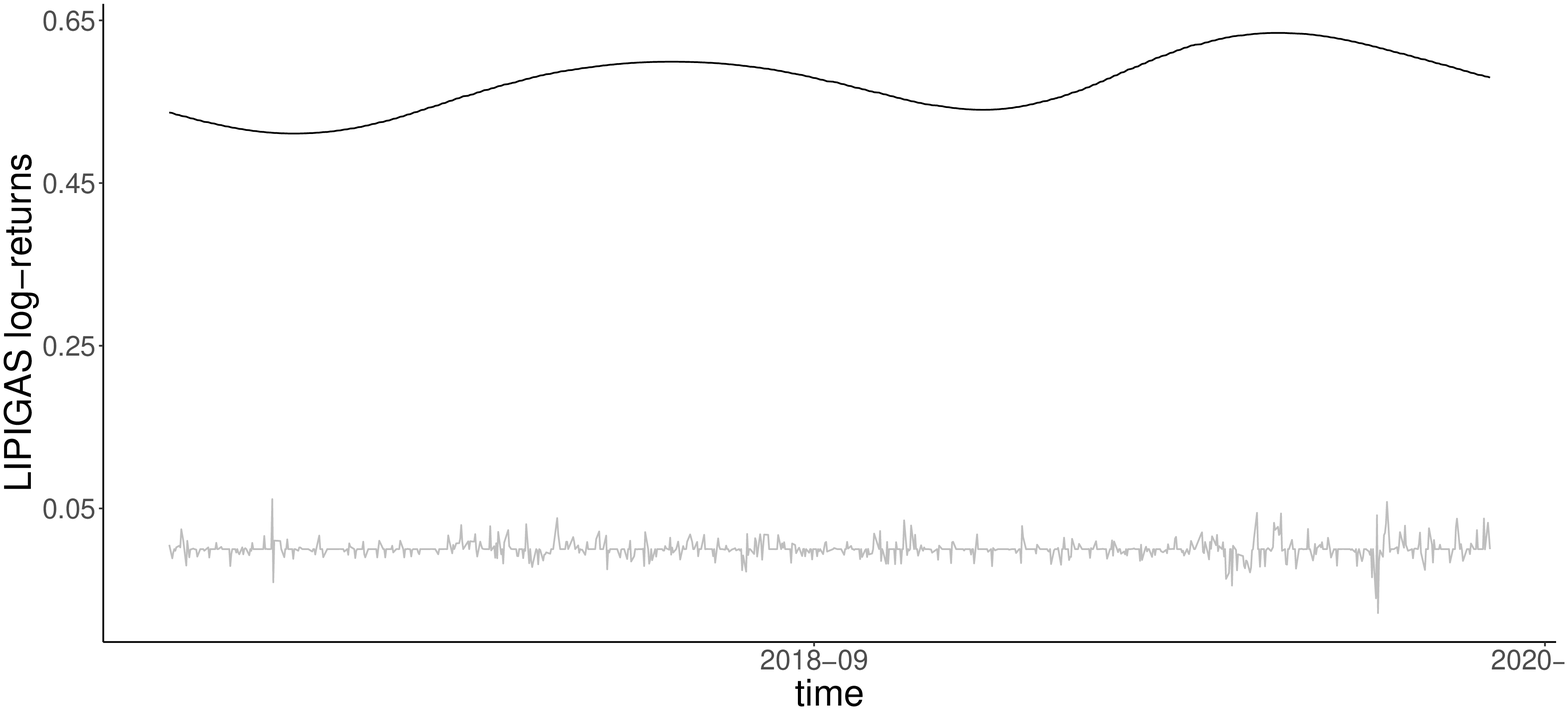}
\caption{\label{four}
{\footnotesize  The same as for Figure \ref{one}, but for the Lipigas and Clinica Las Condes stocks. The liquidity levels seem to be stationary.}}
\end{figure}

%
\begin{figure}[h]\!\!\!\!\!\!\!\!\!\!
\vspace*{18cm}

\protect \includegraphics{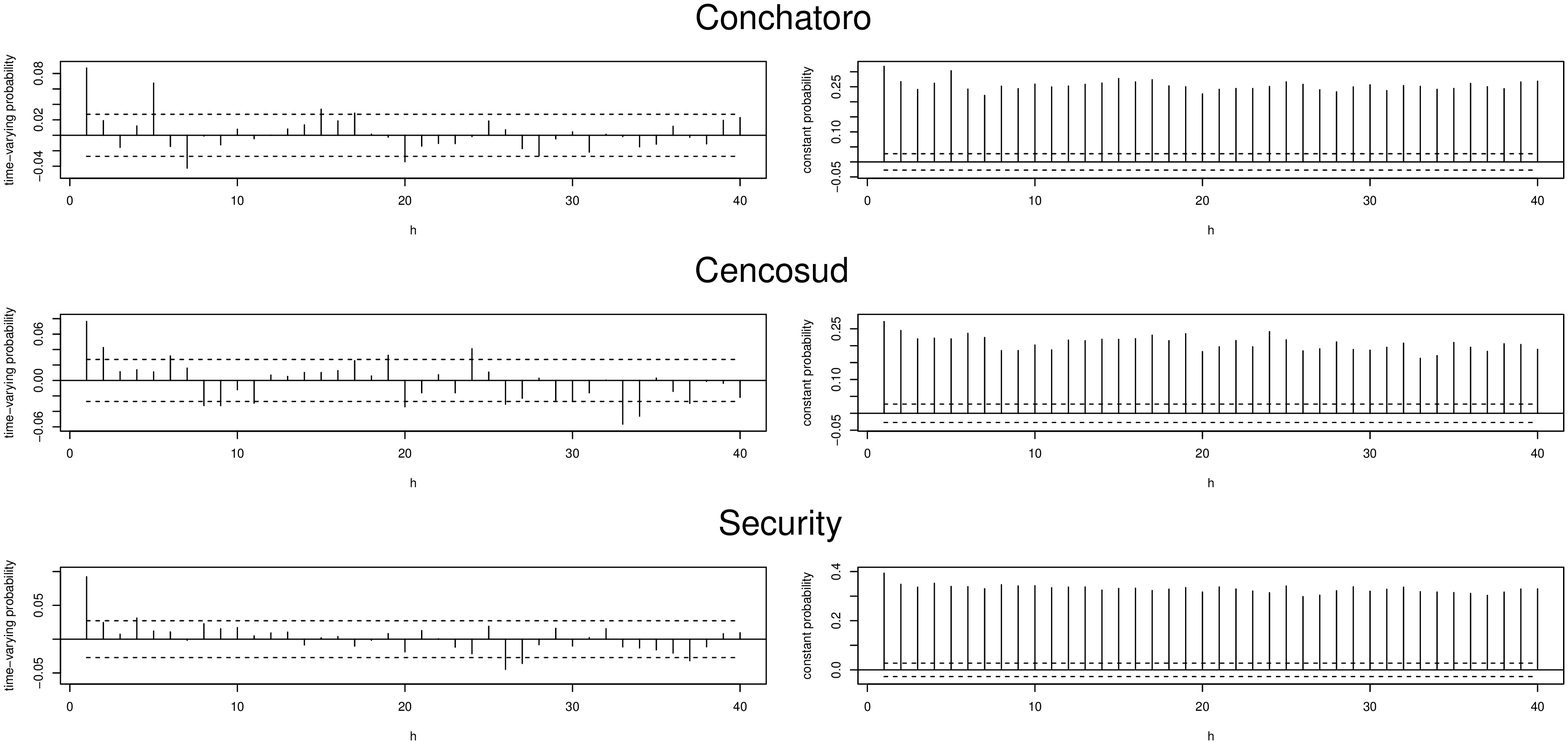}
\protect \includegraphics{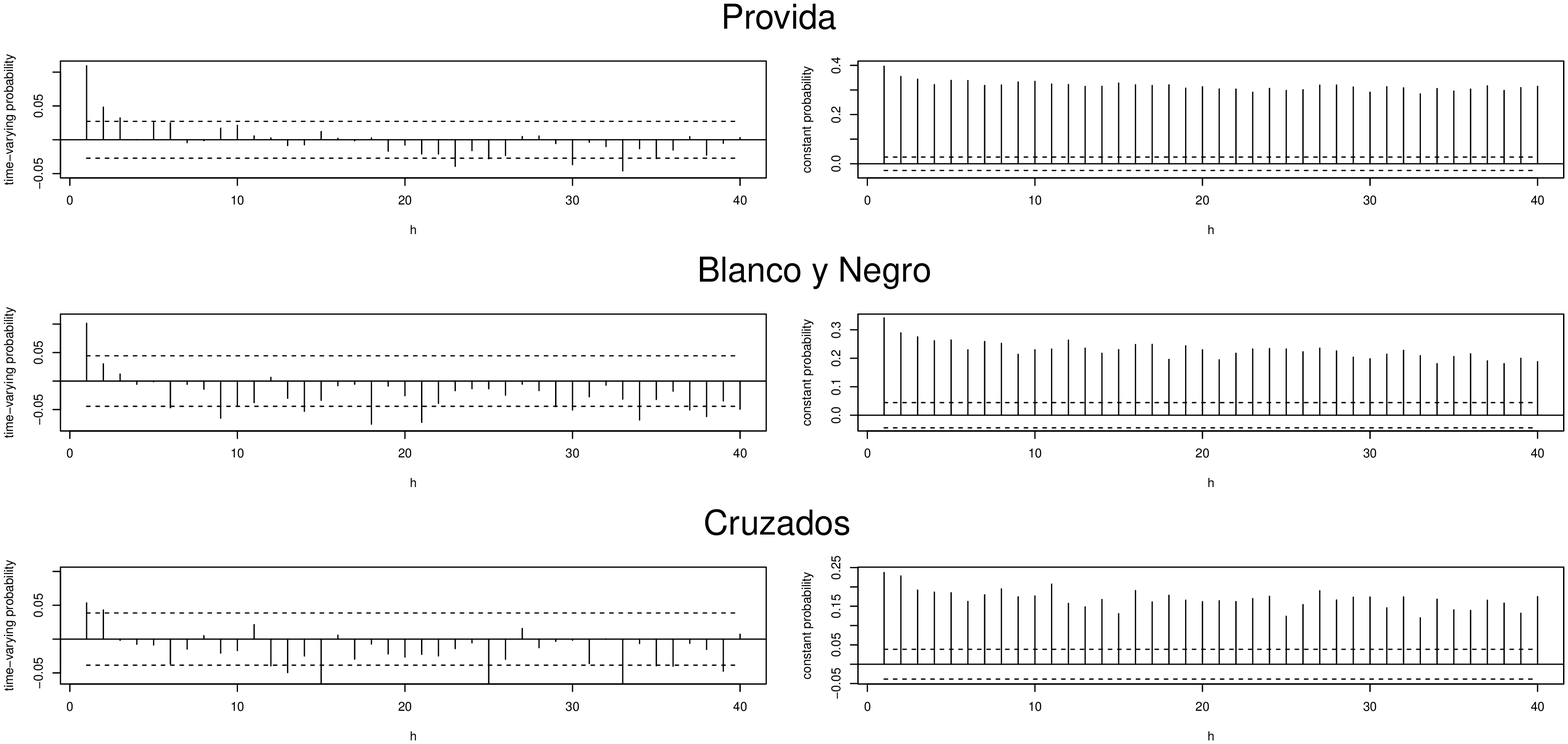}
\caption{\label{one-ACF}
{\footnotesize The dependence structure plots for stocks that seem to exhibit a non-stationary $(a_t)$.}}
\end{figure}

\begin{figure}[h]\!\!\!\!\!\!\!\!\!\!
\vspace*{9cm}

\protect \includegraphics{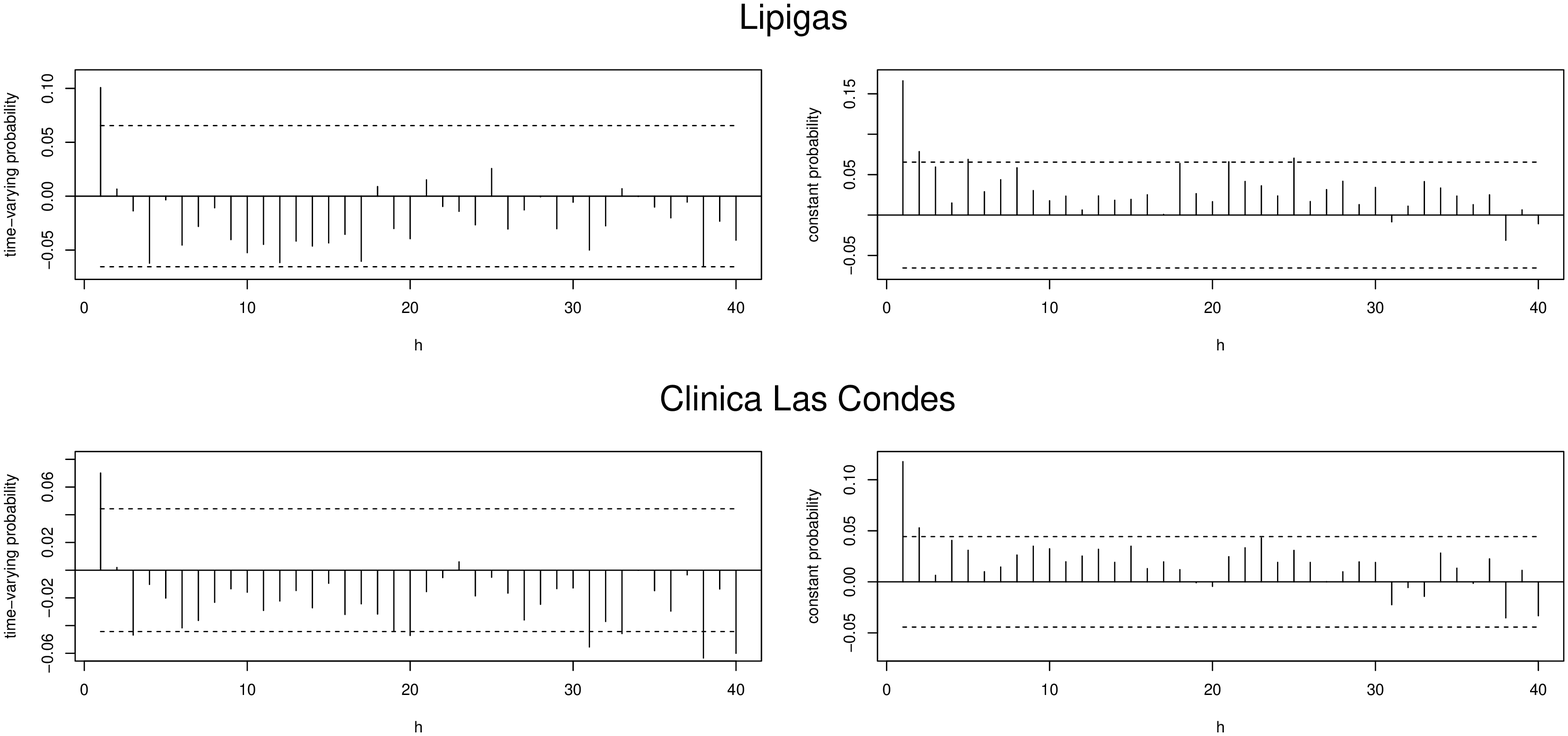}
\caption{\label{two-ACF}
{\footnotesize The dependence structure plots for stocks that seem to exhibit a stationary $(a_t)$.}}
\end{figure}

\end{document}